\newcolumntype{L}[1]{>{\raggedright\let\newline\\\arraybackslash\hspace{0pt}}m{#1}}
\newcolumntype{C}[1]{>{\centering\let\newline\\\arraybackslash\hspace{0pt}}m{#1}}
\newcolumntype{R}[1]{>{\raggedleft\let\newline\\\arraybackslash\hspace{0pt}}m{#1}}
\newtheorem{definition}{Definition}
\newtheorem{theorem}{Theorem}
\newtheorem{lemma}{Lemma}
\newtheorem{example}{Example}
\newcommand{\CHENG}{}
\newcommand{\KAIXIN}{}
\newcommand{\chengfinal}{}
\newcommand{\KXREVIEW}{}
\newcommand{\CHENGREVIEW}{}
  \providecommand\BibTeX{{%
    \normalfont B\kern-0.5em{\scshape i\kern-0.25em b}\kern-0.8em\TeX}}}
\begin{document}


\title{Reinforcement Learning Enhanced Weighted Sampling for Accurate Subgraph Counting on Fully Dynamic Graph Streams}


\author{
\IEEEauthorblockN{
Kaixin Wang\IEEEauthorrefmark{1}, 
Cheng Long\IEEEauthorrefmark{1}\IEEEauthorrefmark{4}\IEEEcompsocitemizethanks{\IEEEcompsocthanksitem\IEEEauthorrefmark{4}Corresponding author.},
Da Yan\IEEEauthorrefmark{2}, 
Jie Zhang\IEEEauthorrefmark{1}, 
H. V. Jagadish\IEEEauthorrefmark{3}
}
\IEEEauthorblockA{
\IEEEauthorrefmark{1}School of Computer Science and Engineering, Nanyang Technological University, Singapore\\
\IEEEauthorrefmark{2}Department of Computer Science, University of Alabama at Birmingham, United States\\
\IEEEauthorrefmark{3}Computer Science and Engineering, University of Michigan, United States\\
\{kaixin.wang, c.long, zhangj\}@ntu.edu.sg, 
yanda@uab.edu,
jag@umich.edu
}
}


\maketitle


\begin{abstract}
    As the popularity of graph data increases, there is a growing need to count the occurrences of subgraph patterns of interest, for a variety of applications. Many graphs are massive in scale and also fully dynamic (with insertions and deletions of edges), rendering exact computation of these counts to be infeasible. Common practice is, instead, to use a small set of edges as a sample to estimate the counts. Existing sampling algorithms for fully dynamic graphs sample the edges with uniform probability. In this paper, we show that we can do much better if we sample edges based on their individual properties. Specifically, we propose a weighted sampling algorithm called \texttt{WSD} for estimating the subgraph count in a fully dynamic graph stream, which samples the edges based on their weights that indicate their importance and reflect their properties. We determine the weights of edges in a data-driven fashion, using a novel method based on reinforcement learning. We conduct extensive experiments to verify that our technique can produce estimates with smaller errors while often running faster compared with existing algorithms.
    
\end{abstract}



\section{Introduction}
\label{sec:intro}

Graphs have been widely used to represent the data structures in online social networks (e.g., Facebook and Twitter) and Internet applications (e.g., Youtube, blogs, and web pages), where vertices represent individuals or entities and edges represent interactions or connections among them. 
Counting a certain subgraph pattern (e.g., triangle) on these graphs can help reveal network structure information \cite{eckmann2002curvature}, detect anomalous behavior \cite{lim2015mascot, shin2020fast} and identify the interests of users \cite{leskovec2007dynamics, zhao2016link}.
{\KXREVIEW{
For example, in social network analysis, a triangle has been {\chengfinal proven} to be an evidence of the phenomena of homophily~\cite{mcpherson2001birds, aiello2012friendship} (i.e., people tend to make friends with who are similar to them) and transitivity~\cite{wasserman1994social, jamali2011transitivity} (i.e., people who share common friends become friends). Therefore, many concepts in social network, such as clustering coefficient~\cite{buriol2006counting} and transitivity ratio~\cite{jha2013space}, are based on the triangle count. 
%
Moreover, 
\cite{kang2011spectral, kang2012heigen} show that in web networks,
normal nodes usually have similar and mild ratios of the triangle counts to the degrees {\CHENG whereas} spammers or harmful accounts are often linked with fewer but remarkably well-connected nodes, resulting in extremely high ratios.
Therefore, {\CHENG they propose to detect} anomalous nodes and structures in email systems and phone call networks {\CHENG based on the triangle counts and degrees information, and the effectiveness of the proposed method has been verified via case studies~\cite{lim2015mascot}}. 
}}
Among the existing studies for estimating the count of a certain subgraph pattern, some focus on insertion-only graph streams~\cite{lim2015mascot, ahmed2017sampling, zhang2019t} and others on fully dynamic graph streams (which involve both insertions and deletions of edges)~\cite{stefani2017triest, shin2020fast, lee2020temporal, shin2017wrs, shin2018think}.
In this paper, we study the problem on fully dynamic graph streams since they are more general than insertion-only graph streams and in many of these applications, the graphs are in the form of fully-dynamic graph streams with edges inserted and deleted dynamically. 
For example, in a social network such as Facebook, connections and disconnections among users would happen as time goes by, which correspond to edge insertions and deletions, respectively. 

We observe that the existing studies for fully dynamic graph streams all sample the edges with uniform probabilities, e.g., each edge is treated equally for sampling~\cite{ shin2020fast, stefani2017triest, lee2020temporal, shin2017wrs, shin2018think}, which would likely result in sub-optimal samples.
%
To illustrate, suppose that we would like to estimate the number of triangles (3-cliques) in a social network like Twitter. There would be more triangles involving two celebrities if they subscribe each other, and thus the edge between them should be sampled with a higher probability than those among the generic public so as to achieve smaller estimation variance.

Motivated by the phenomenon that different edges have different importance and should be assigned with different probabilities for sampling, in this paper, we aim to develop a \emph{weighted} sampling algorithm for estimating the count of a certain subgraph structure on fully dynamic graph streams. 
In a weighted sampling algorithm, each edge would be assigned with a \emph{weight} indicating the importance of the edge such that edges with higher weights would be sampled with higher probabilities.
This immediately gives rise to two problems. First, how can we perform weighted sampling on a fully dynamic graph stream? Second, while we can all intuitively see that more important edges should be weighted higher, what exactly should these weights be? In this paper, we address both these problems in turn.

To address the first problem, we first review an existing weighted sampling framework called \texttt{GPS}~\cite{ahmed2017sampling} for insertion-only graph streams. 
{\CHENG{Specifically, \texttt{GPS} maintains a reservoir of a fixed size $M$. Whenever an edge is inserted, it computes a rank of the edge in a probabilistic manner such that {\chengfinal it would be more likely that} the rank {\chengfinal is} higher if the edge's weight is higher. It then samples the edge if its rank is among the top-$M$. However, \texttt{GPS} is not capable of dealing with fully-dynamic graph streams since an edge with its rank not among the top-$M$ may be sampled if some space of the reservoir has been released due to some edge deletions that happened earlier.}}
We then present an adaption of \texttt{GPS} called \texttt{GPS-A}, which works similarly as \texttt{GPS} for edge insertions and attaches to each edge to be deleted a binary tag without practically deleting it. \texttt{GPS-A} works for fully-dynamic graph streams with both insertions and deletions, but is lazy to clear up the space taken for storing edges that have been deleted in a reservoir, which results in low accuracy. 
Finally, we introduce a carefully designed weighted sampling framework called \emph{weighted sampling with deletions} (\texttt{WSD}) for fully-dynamic graph streams. \texttt{WSD} extends \texttt{GPS} in a smart and careful way such that it can handle both edge insertions and deletions and avoid the drawback of \texttt{GPS-A}, i.e., the edges that have been deleted would not be stored in a reservoir.
{\CHENG{One intuition behind \texttt{WSD} is that it maintains a rank threshold and samples an edge only if its rank exceeds the threshold. The threshold is updated properly such that any two edges with equal weights would be sampled with equal probabilities.}}
We then construct an estimator of the count of a certain subgraph pattern based on the sampled edges in the reservoir by \texttt{WSD} and prove the \emph{unbiasedness} of the estimator.
%

For the second problem, the conventional approach has been to use heuristics to set the weights. For example, in the weighted sampling method \texttt{GPS} on insertion-only graph streams, the strategy is to set the weight of an edge to be number of connections between the edge and those in the reservoir or the number of subgraph structures formed by the edge and those in the reservoir. Nevertheless, the heuristic-based methods are non-adaptive to the underlying dynamics.
To fully unleash the power of \texttt{WSD}, we further propose an adaptive way of setting the weights of edges via reinforcement learning (RL).
Specifically, we regard the sampling process for subgraph counting problem as a sequential decision making process, i.e., we make a decision on how to set the weight of each new edge. 
We then model the sequential decision process as a Markov Decision Process (MDP)~\cite{puterman2014markov} and use a policy gradient method to learn a policy for the MDP. 
We carefully design the MDP including states and rewards such that (1) the states capture both temporal and topological information of the edge dynamics and can be easily computed with the sampled edges, and (2) the objective of maximizing the rewards of the policy is consistent with the goal of minimizing the estimation error. 

In summary, the main contributions of this paper are as follows. 

\begin{itemize}
    \item We propose a new fixed-size, weight-sensitive, one-pass sampling method \texttt{WSD} to handle fully dynamic graph streams. \texttt{WSD} is the first weighted sampling method for fully dynamic graph steams. 
    Based on \texttt{WSD}, we construct an estimator of the count of a given subgraph structure and prove the unbiasedness of the estimator. (Section~\ref{sec:sampling})
    \item We develop a reinforcement learning-based method for setting the weights of edges in \texttt{WSD} in a data-driven fashion, which is superior over heuristic-based methods. (Section~\ref{sec:rl})
    \item We conduct extensive experiments on several real graphs, including community networks, citation graphs, social networks and web graphs to verify that the new sampling framework \texttt{WSD} and the RL-based weighting method work better than the state-of-the-art methods. For example, \texttt{WSD} improves the effectiveness by 25\% - 47\% and often runs faster than the state-of-the-art method \texttt{WRS}~\cite{lee2020temporal, shin2017wrs}. (Section~\ref{sec:exp})
\end{itemize}

For the rest of the paper, we present the problem definition in Section~\ref{sec:preliminary}, review the related work in Section~\ref{sec:related}, and conclude the paper in Section~\ref{sec:conclusion}.


\section{Preliminaries and Problem Definition}
\label{sec:preliminary}

\if 0
\subsection{Problem Definition}
\label{subsec:problem}
\fi

\if 0
\begin{table}[t]
    \caption{Notations and their descriptions}
    \label{tab:notation}
    \centering
    \begin{tabular}{c|c}
    \toprule
    \textbf{Notation}    & \textbf{Description} \\
    \midrule
    $s^{(t)}=(op, e_t)$ & Edge event at $t$, where $op\in \{+, -\}$\\
    $S^{(t)}$ & Graph stream before $t$-th operation (inclusive) \\
    $G^{(t)}=(V^{(t)}, E^{(t)})$ & Induced graph from $S^{(t)}$  \\
    $H$ & A subgraph pattern to count \\
    $\mathcal{J}^{(t)}$ & Set of all subgraph patterns in $G^{(t)}$\\
    $J=\{e_{i_1}, \cdots, e_{i_{|H|}}\}$ & A subgraph identified by ordered edges\\ \midrule
    $M$ & Maximum size of the reservoir $\mathcal{R}$ \\
    $\mathcal{R}^{(t)}$ & Reservoir after the $t$-th operation \\
    $\tau_p$ & Rank threshold for sampling\\
    $\tau_q$ & Rank value for computing probability\\
    $c^{(t)}$ & Estimation of $|\mathcal{J}^{(t)}|$ at time $t$\\
    \bottomrule
    \end{tabular}
\end{table}

\textbf{Cheng-22 Jan: (1) Please include a table summarizing the notations and their meanings.}
\fi

We first review the problem of counting subgraphs in a fully dynamic graph stream~\cite{stefani2017triest}. 
%
Consider a dynamic graph $G$, which evolves over time with the edges being inserted and deleted dynamically. 
Formally, the graph $G$ can be modelled as an edge stream $S = \{s^{(1)}, s^{(2)}, \cdots\}$, where $s^{(t)} = (op, e_t)$ represents an event of inserting (indicated by $op = +$) or deleting (indicated by $op = -$) the edge $e_t$.
We denote by $S^{(t)}$ the sequence of the first $t$ edge events, i.e., $S^{(t)}=\{s^{(1)}, s^{(2)}, \cdots, s^{(t)}\}\subset S$.
Let $G^{(t)}=(V^{(t)}, E^{(t)})$ be the induced graph from $S^{(t)}$. 
We assume that all edge events are feasible, i.e., if $e\in E^{(t)}$ (resp. $e\notin E^{(t)}$), $s^{(t+1)}$ cannot be $(+, e)$ (resp. $(-,e)$). 

Consider a certain subgraph pattern $H$ (e.g., a triangle). We use $|H|$ to denote the number of edges in $H$. 
Given a graph $G^{(t)}=(V^{(t)},E^{(t)})$, we use the notation $\mathcal{J}^{(t)}$ to denote the set of all subgraphs which are isomorphic to $H$ in $G^{(t)}$. 
Each subgraph $J\in \mathcal{J}$ can be uniquely identified by a set of ordered edges $J=\{e_{i_1}, e_{i_2}, \cdots, e_{i_{|H|}}\}$ with $i_1 < i_2 < \cdots < i_{|H|}$ being the arrival order of these edges. 
%

\begin{definition}[Subgraph Counting in Fully Dynamic Graph Streams \cite{stefani2017triest, lee2020temporal}]
Given a fully dynamic graph stream $S$, the problem is to estimate $|\mathcal{J}^{(t)}|$ for any $t$ accurately with the following constraints.
\begin{itemize}
    \item \textbf{No Knowledge}. We have no knowledge about the stream (e.g., the size of stream, the number of vertices and edges, etc.) in advance. 
    \item \textbf{{\chengfinal Limited} Memory}. We can store at most $M$ edges in a reservoir, where $M$ is a predefined parameter and independent to the size of the stream.
    \item \textbf{Single Pass}. Edge insertions and deletions are processed one by one in their arrival order. Edges cannot be accessed again once they are discarded. 
\end{itemize}
\label{def:problem}
\end{definition}

{\CHENGREVIEW 
These three constraints are inherited from existing studies~\cite{stefani2017triest, lee2020temporal}, on which we elaborate as follows. First, the ``no knowledge'' constraint naturally holds in many social network analytics applications, e.g., we normally cannot know how many connections would be formed in a social network in the future. Second, the ``limited memory'' constraint often holds when (1) we aim to achieve real-time responses (which is only possible when a limited number of edges are stored and used for estimating the properties of a graph) and/or (2) the graph is huge (e.g., it is web-scale and cannot fit in main memory in many cases). Third, the ``single pass'' constraint holds when we have a very high efficiency requirement since scanning a stream multiple times would be costly.}

\section{Weighted Sampling Frameworks and Subgraph Count Estimators}
\label{sec:sampling}

In this section, we first review an existing weighted sampling framework called \texttt{GPS}~\cite{ahmed2017sampling}, which only works for insertion-only graph streams, in Section~\ref{subsec:gps}. We then present a straightforward adaption of \texttt{GPS} called \texttt{GPS-A}, which works similarly as \texttt{GPS} for edge insertions and attaches to each edge to be deleted a binary tag without practically deleting it, in Section~\ref{subsec:GPS-A}. \texttt{GPS-A} works for fully-dynamic graph streams with both insertions and deletions, but is lazy to clear up the space taken for storing edges that have been deleted in a reservoir, which results in low accuracy. Finally, we introduce a carefully designed weighted sampling framework called \emph{weighted sampling with deletions} (\texttt{WSD}) for fully-dynamic graph streams in Section~\ref{subsec:wsd}. \texttt{WSD} extends \texttt{GPS} in a smart and careful way such that it can handle both edge insertions and deletions and avoid the drawback of \texttt{GPS-A}, i.e., the edges that have been deleted would not be stored in a reservoir.

%
\subsection{\texttt{GPS} Framework}
\label{subsec:gps}

\noindent\textbf{Sampling Process.}
\texttt{GPS} framework follows the \emph{priority sampling scheme}~\cite{duffield2007priority} and aims to sample a fixed-size reservoir of edges via a single pass such that edges that are deemed more important are sampled with higher probabilities. 
Specifically, when handling an event $(+, e_t)$ at time step $t$, it involves three steps. \underline{First}, it assigns the edge $e_t$ an appropriate \emph{weight} denoted by $w(e_t)$, based on the current reservoir $\mathcal{R}^{(t-1)}$. 
For example, it sets the weight to be the number of subgraph structures (e.g., triangles) that would be newly formed by $e_t$ as an indicator of the importance of $e_t$ - the larger the number is, the higher the weight is~\cite{ahmed2017sampling}.
%
We denote by $W(e, \mathcal{R})$ the function that computes the weight of an edge $e$ based on a reservoir $\mathcal{R}$. 
\underline{Second}, it samples a value $u$ from $(0, 1]$ uniformly and computes a \emph{rank} for the edge $e_t$, denoted by $r(e_t)$, based on $w(e_t)$ and $u$. 
For an edge with a higher weight, its rank would be likely higher - the randomness here is due to the sampling process of the value $u$.
%
%
We denote by $r=f(w)$ the function that computes the rank of an edge with the weight $w$. 
For example, $f(w)=w/u$ is used as the rank function in \cite{ahmed2017sampling}, where $u$ is a random value uniformly sampled in range $(0, 1]$.
\underline{Third}, it includes the edge $e_t$ in the reservoir if either (1) the reservoir is not full or (2) the rank of $e_t$ is larger than the smallest rank of an edge in the reservoir (in this case the edge with the smallest rank in the reservoir would be dropped due to the capacity limit of the reservoir); otherwise, it does not include the edge $e_t$ in the reservoir. For this step, it uses a minimum priority queue with the size equal to $M$ and the keys to be the ranks of edges in the queue.

With \texttt{GPS}, at the end of time step $t$, for each edge $e$ that has been inserted, the probability that edge $e$ is included in the reservoir, i.e., $\mathcal{R}^{(t)}$, is equal to the probability that $e$'s rank, i.e., $r(e)$, is larger than the $(M+1)^{th}$ largest rank among those of edges that have been inserted {\chengfinal (including $e$)}, which we denote by $r_{M+1}$. For $t \le M$, we define $r_{M+1}$ to be equal to 0. Specifically, we have the following equation.
\begin{equation}
\label{eq:gps-prob}
    \mathbb{P}[e \in \mathcal{R}^{(t)}] = 
    \mathbb{P}[r( e) > r_{M+1}]
\end{equation}
{\KAIXIN Note that this probability depends on the rank function. For example in \cite{ahmed2017sampling}, $r = f(w) = w/u$, then $\mathbb{P}[r > r_{M+1}]=\min\{1, w/r_{M+1}\}$. }
The above equation has been {\chengfinal proven} in~\cite{ahmed2017sampling}. Here, we provide some intuitive explanations. 
For $t\le M$, each edge that has been inserted would be included in the reservoir for sure, and thus the equation holds. For $t > M$, the reservoir stores those edges with the {\chengfinal $M^{th}$} largest ranks, and thus the probability that an edge to be included in the reservoir should be equal to the probability that its rank is larger than the $(M+1)^{th}$ largest rank of the edges that have been inserted.


\smallskip
\noindent\textbf{Estimator and Analysis.}
{\KXREVIEW{
It has been {\chengfinal proven} in \cite{ahmed2017sampling} that the probability that a set of edges $E=\{e_1, \cdots, e_{|E|}\}$ $(|E| \le M)$ is included in the reservoir $\mathcal{R}$ at the end of time $t$ is as follows. 
\begin{equation}
\label{eq:gps-edges}
    \mathbb{P}[E\subset \mathcal{R}] = \prod_{e\in E} \mathbb{P}[e\in \mathcal{R}] = \prod_{e\in E} \mathbb{P}[r(e) > r_{M+1}]
\end{equation}
where $r_{M+1}$ is observed at $t$. 
}}
Let $J=\{e_{i_1}, e_{i_2}, \cdots, e_{i_{|H|}}\}$ be a subgraph pattern formed at time $t_a(J)$. Note that $t_a(J)$ is the time at which the last edge of $J$ appears, i.e., $t_a(J)=i_{|H|}$. We define a random variable $X_{\texttt{GPS}}^{J}$ for $J$ as follows.
\begin{equation}
    X_{\texttt{GPS}}^{J}= \displaystyle\prod \limits_{e \in J \setminus e_{i_{|H|}}} \frac{\mathbb{I}(e\in\mathcal{R})}{\mathbb{P}[r( e) > r_{M+1}]}
\end{equation}
where $\mathbb{I}(\cdot)$ is an indicator function, 
and $r_{M+1}$ and $\mathcal{R}$ are the $(M+1)^{th}$ largest rank and the reservoir observed just after time $t_a(J)-1$. 
%
We define an estimator of the count of subgraph structures at
any time $t$, denoted by $c_{\texttt{GPS}}^{(t)}$, by the following equation.
\begin{equation}
    c_{\texttt{GPS}}^{(t)} = \sum_{J \in \mathcal{A}^{(t)}} X_{\texttt{GPS}}^{J}
\end{equation}
where $\mathcal{A}^{(t)}$ is set of subgraphs which are isomorphic to $H$ and have been added to the graph $G^{(t)}$ by time $t$. 
In~\cite{ahmed2017sampling}, it has been shown that the above estimator $c_{\texttt{GPS}}^{(t)}$ is \emph{unbiased}.
\begin{theorem}[Unbiasedness of the \texttt{GPS} estimator \cite{ahmed2017sampling}]
Given the graph stream $S$, which only consists of edge insertion events, and $M\geq |H|$, 
$\forall t$,
we have
\begin{equation}
    \mathbb{E}[c_{\texttt{GPS}}^{(t)}] = |\mathcal{A}^{(t)}| = |\mathcal{J}^{(t)}|.
\end{equation}
\label{theo:gps}
\end{theorem}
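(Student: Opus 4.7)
The plan is to reduce the theorem to a statement about a single subgraph $J$ via linearity of expectation, and then use the factorized inclusion probability in Equation~\eqref{eq:gps-edges} to show that each summand has expectation one.

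First, I would note that since the stream $S$ is insertion-only, no subgraph structure isomorphic to $H$ that has ever appeared can disappear, so $\mathcal{A}^{(t)} = \mathcal{J}^{(t)}$; this takes care of the second equality. For the first equality, linearity of expectation gives
\begin{equation}
\mathbb{E}[c_{\texttt{GPS}}^{(t)}] \;=\; \sum_{J \in \mathcal{A}^{(t)}} \mathbb{E}[X_{\texttt{GPS}}^{J}],
\end{equation}
so it suffices to show $\mathbb{E}[X_{\texttt{GPS}}^{J}] = 1$ for every fixed $J = \{e_{i_1},\ldots,e_{i_{|H|}}\} \in \mathcal{A}^{(t)}$.

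Next I would fix such a $J$ and focus on the state of the algorithm just after time $t_a(J)-1$, i.e., immediately before the last edge $e_{i_{|H|}}$ of $J$ arrives. Let $\mathcal{R}$ and $r_{M+1}$ denote the reservoir and the $(M+1)^{\text{th}}$-largest rank observed at that instant. The set $E := J \setminus \{e_{i_{|H|}}\}$ has size $|H|-1 \le M-1 \le M$, so Equation~\eqref{eq:gps-edges} applies and yields
\begin{equation}
\mathbb{P}\bigl[E \subset \mathcal{R}\bigr] \;=\; \prod_{e \in E} \mathbb{P}\bigl[r(e) > r_{M+1}\bigr].
\end{equation}
Because $\prod_{e\in E}\mathbb{I}(e\in\mathcal{R}) = \mathbb{I}(E\subset\mathcal{R})$, taking expectations in the definition of $X_{\texttt{GPS}}^{J}$ gives
\begin{equation}
\mathbb{E}[X_{\texttt{GPS}}^{J}] \;=\; \frac{\mathbb{P}[E\subset\mathcal{R}]}{\prod_{e\in E}\mathbb{P}[r(e)>r_{M+1}]} \;=\; 1,
\end{equation}
and summing over $J \in \mathcal{A}^{(t)}$ completes the argument.

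The step that requires the most care is the third one: the denominator $\mathbb{P}[r(e)>r_{M+1}]$ is actually a random quantity itself because $r_{M+1}$ depends on the ranks assigned to edges arriving before $t_a(J)$. To make the cancellation rigorous, I would condition on the sigma-algebra generated by the random ranks of all edges inserted strictly before the edges of $E$ are inserted (so $r_{M+1}$ becomes a constant inside the conditional expectation), then invoke Equation~\eqref{eq:gps-edges} conditionally, and finally take outer expectation using the tower property. This is the only subtlety; the rest is linearity and the hypothesis $M \ge |H|$ which ensures that $|E|=|H|-1 \le M$ so Equation~\eqref{eq:gps-edges} is applicable.
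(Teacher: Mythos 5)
Your proposal is correct and follows essentially the same route the paper takes: the paper defers the proof of this particular theorem to the cited reference, but its own proofs of the analogous results for \texttt{GPS-A} (Appendix~\ref{app:proof-gps-a}) and \texttt{WSD} (Appendix~\ref{app:proof-wsd}) use exactly your decomposition --- apply the product formula of Eq.~(\ref{eq:gps-edges}) to conclude $\mathbb{E}[X^{J}]=1$ for each subgraph $J$, then sum by linearity of expectation. Your closing remark about conditioning on the threshold $r_{M+1}$ to justify the cancellation is a genuine refinement that the paper's own proofs leave implicit.
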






\vspace{-2mm}
\noindent\textbf{Inapplicability of \texttt{GPS} for Fully Dynamic Graph Streams.}
Unfortunately, \texttt{GPS} cannot be applied to fully dynamic graph streams, which involve both edge insertions and edge deletions. The reason is as follows. 
The correctness of \texttt{GPS} relies on the fact it guarantees that the edges with equal weights would be included in the reservoir with the equal probabilities (as shown in Eq.~(\ref{eq:gps-prob})). However, this would no longer be guaranteed when \texttt{GPS} is applied to fully dynamic graph streams directly. To illustrate, consider a scenario below. 
%
\begin{example}
Consider an edge stream where (1) all edges are assigned with equal weights, (2) at time $t$ ($t > M+1$) the first event of an edge deletion, i.e., $(-, e_{t})$, happens, and (3) at time $t' = t +1$, an event of edge insertion $(+, e_{t'})$ happens.
Let $p$ and $p'$ be the probability that an edge is included in the reservoir at time $t$ and $t'$, respectively. 
Obviously, we have $0<p' \le p < 1$ since the probability that an edge is included in the reservoir cannot be increasing.
Note that probabilities $p$ and $p'$ are shared by all edges since they have equal weights. 
{\KAIXIN Let $p''$ be the probability that the edge $e_{t'}$ is inserted into the reservoir when the reservoir is full at time $t'$. }
{\chengfinal Note that $p''$ is different from $p'$.}
%
%
We then deduce that for edge $e_{t'}$, the probability that it is included in the reservoir should be equal to $p\cdot 1 + (1-p)\cdot p''$. For the term $p\cdot 1$, it corresponds to the case that $e_{t}$ has been included in the reservoir and then deleted from the reservoir at time $t$ (and thus the probability for this case is equal to $p$), and in this case, $e_{t'}$ would be included in the reservoir for sure (i.e., with the probability 1) since
(1) the reservoir would be not full when inserting $e_{t'}$, and (2) \texttt{GPS} would unconditionally include an edge when the reservoir is not full. 
For the term $(1-p)\cdot p''$, it corresponds to the case that $e_{t}$ has not been included in the reservoir at time $t$ (and thus the probability for this case is equal to $(1-p)$), and in this case, $e_{t'}$ would be included in the reservoir with probability $p''$ {\chengfinal (by definition of $p''$)}. 
In conclusion, the probability that $e_{t'}$ would be included in the reservoir at time $t'$ is larger than that for other edges since $p\cdot 1 + (1-p)\cdot p'' > p \ge p'$ (though all edges are assigned with equal weights), which implies that \texttt{GPS} would fail in this scenario.
\end{example}

\subsection{\texttt{GPS-A} Framework}
\label{subsec:GPS-A}


\noindent\textbf{Sampling Process.} 
%
In \texttt{GPS-A}, the sampling process is exactly the same as that of \texttt{GPS} except that when a deletion event happens on an edge in the reservoir, we only attach a ``DEL'' tag to the edge, but we do not remove the edge from the reservoir. Equivalently speaking, we \emph{ignore} the deletion operations during the sampling process first (by attaching the tags to edges only), and when constructing the estimator, we would \emph{neglect} those edges with the tags. In this way, the probabilities of including the edges with equal weights in the reservoir would be equal, in the same way as \texttt{GPS}. The drawback of \texttt{GPS-A} is that some space of the reservoir, which is taken by the edges with the ``DEL'' tags and can be used for including other edges otherwise, would be wasted.

\smallskip
\noindent\textbf{Estimator and Analysis.}
{\KXREVIEW{
Since \texttt{GPS-A} simply attaches a tag to {\CHENG each of} the deleted edges, Eq.~(\ref{eq:gps-edges}) also holds for {\CHENG \texttt{GPS-A}}.
}}
We denote by $\mathcal{R}_{tag}$ the set of edges with the ``DEL'' tags in the reservoir $\mathcal{R}$. Let $J=\{e_{i_1}, e_{i_2}, \cdots, e_{i_{|H|}}\}$ be a subgraph formed at time $t_a(J)$. Note that $t_a(J)$ is the time at which the last edge of $J$ appears, i.e., $t_a(J)=i_{|H|}$.  We define a random variable $X_{\texttt{GPS-A}}^{J}$ for $J$ as follows. 
\begin{equation}
    X_{\texttt{GPS-A}}^{J}= \displaystyle\prod \limits_{e \in J \setminus e_{i_{|H|}}} \frac{\mathbb{I}(e\in \mathcal{R}\setminus \mathcal{R}_{tag} )}{\mathbb{P}[r( e) > r_{M+1}]}
\end{equation}
where $\mathbb{I}(\cdot)$ is an indicator function, $\mathcal{R}$ and $r_{M+1}$ are the reservoir and the {\chengfinal $(M+1)^{th}$} largest rank of an edge among those that have appeared, as observed just after time $t_a(J)-1$. 
%
Assume that a subgraph $J$ is destroyed at time $t_d(J)$ when a deletion event happens on an edge $e_x$ of $J$. We define another random variable $Y_{\texttt{GPS-A}}^J$ for $J$ as follows. 
\begin{equation}
    Y_{\texttt{GPS-A}}^J= \displaystyle\prod \limits_{e \in J \setminus e_x} \frac{\mathbb{I}(e\in\mathcal{R} \setminus \mathcal{R}_{tag})}{\mathbb{P}[r( e) > r_{M+1}]}
    \label{equation:estimator-2}
\end{equation}
where 
$\mathcal{R}$ and $r_{M+1}$ are the reservoir and the {\chengfinal $(M+1)^{th}$} largest rank of an edge among those that have appeared, as observed just after time $t_d(J)-1$. 
Based on these two subgraph estimators, we define an estimator of the count of subgraph structures at any time $t$, denoted by $c_{\texttt{GPS-A}}^{(t)}$, as follows.  
\begin{equation}
    c_{\texttt{GPS-A}}^{(t)} = \sum_{J \in \mathcal{A}^{(t)}} X_{\texttt{GPS-A}}^J - \sum_{J \in \mathcal{D}^{(t)}} Y_{\texttt{GPS-A}}^J
\end{equation}
where $\mathcal{A}^{(t)}$ (resp. $\mathcal{D}^{(t)}$) is set of subgraphs which are isomorphic to $H$ and have been added to (resp. deleted from) the graph $G^{(t)}$ by time $t$.
\begin{theorem}[Unbiasedness of the subgraph count estimator of \texttt{GPS-A}] 
\label{theo:gps-a}
Given the graph stream $S$ and $M\geq |H|$, 
$\forall t$,
we have
\begin{equation}
    \mathbb{E}[c_{\texttt{GPS-A}}^{(t)}] = |\mathcal{A}^{(t)}| -  |\mathcal{D}^{(t)}| = |\mathcal{J}^{(t)}|.
\end{equation}
\end{theorem}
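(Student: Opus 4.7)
The plan is to prove the claim by linearity of expectation, splitting $\mathbb{E}[c_{\texttt{GPS-A}}^{(t)}]$ into the contributions from $\mathcal{A}^{(t)}$ and $\mathcal{D}^{(t)}$, and showing that $\mathbb{E}[X_{\texttt{GPS-A}}^{J}] = 1$ for every $J\in\mathcal{A}^{(t)}$ and $\mathbb{E}[Y_{\texttt{GPS-A}}^{J}] = 1$ for every $J\in\mathcal{D}^{(t)}$. Once these two identities are established, the two sums in $c_{\texttt{GPS-A}}^{(t)}$ collapse to $|\mathcal{A}^{(t)}|$ and $|\mathcal{D}^{(t)}|$ respectively, and the difference is exactly $|\mathcal{J}^{(t)}|$ by definition.

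First I would verify the ``numerator equals denominator'' identity for $X_{\texttt{GPS-A}}^{J}$. Fix $J = \{e_{i_1},\ldots,e_{i_{|H|}}\}\in\mathcal{A}^{(t)}$ and consider the instant just after time $t_a(J)-1$, at which the random variable is evaluated. Since $J$ is born at time $t_a(J)$, every edge $e\in J\setminus e_{i_{|H|}}$ is present in $G^{(t_a(J)-1)}$ and in particular has not yet been the target of a deletion event. Consequently none of these edges carries the ``DEL'' tag at that moment, so $\mathbb{I}(e\in\mathcal{R}\setminus\mathcal{R}_{tag}) = \mathbb{I}(e\in\mathcal{R})$ for each such $e$. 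Because \texttt{GPS-A} only \emph{tags} and never physically removes edges, its reservoir evolution on insertions is identical to that of \texttt{GPS}, so Eq.~(\ref{eq:gps-edges}) applies verbatim to the set $J\setminus e_{i_{|H|}}$ of $|H|-1 \le M$ edges. Taking expectations and applying the joint inclusion probability gives
\begin{equation}
\mathbb{E}[X_{\texttt{GPS-A}}^{J}] = \frac{\mathbb{P}[J\setminus e_{i_{|H|}} \subset \mathcal{R}]}{\prod_{e\in J\setminus e_{i_{|H|}}}\mathbb{P}[r(e) > r_{M+1}]} = 1.
\end{equation}

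Next I would handle $Y_{\texttt{GPS-A}}^{J}$ by the same device. For $J\in\mathcal{D}^{(t)}$, the variable is evaluated just after time $t_d(J)-1$, one step before the destroying deletion event on $e_x$. At that instant all $|H|$ edges of $J$ still belong to the current graph, so in particular the $|H|-1$ edges in $J\setminus e_x$ are untagged, and once again $\mathbb{I}(e\in\mathcal{R}\setminus\mathcal{R}_{tag})=\mathbb{I}(e\in\mathcal{R})$ for each of them. Applying Eq.~(\ref{eq:gps-edges}) to $J\setminus e_x$ and taking expectation yields $\mathbb{E}[Y_{\texttt{GPS-A}}^{J}] = 1$ by the same cancellation. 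Summing over $\mathcal{A}^{(t)}$ and $\mathcal{D}^{(t)}$ and using linearity of expectation gives $\mathbb{E}[c_{\texttt{GPS-A}}^{(t)}] = |\mathcal{A}^{(t)}| - |\mathcal{D}^{(t)}| = |\mathcal{J}^{(t)}|$.

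The main subtlety I expect is the justification of replacing $\mathcal{R}\setminus\mathcal{R}_{tag}$ by $\mathcal{R}$ inside the indicators at the precise evaluation times $t_a(J)-1$ and $t_d(J)-1$. One must carefully argue that no edge of $J\setminus e_{i_{|H|}}$ (respectively $J\setminus e_x$) can already be tagged at that moment, which relies on the feasibility assumption on the stream (an edge cannot be deleted unless it is currently present) together with the fact that every edge of $J$ is present at the chosen evaluation time. A secondary, but purely notational, point is that Eq.~(\ref{eq:gps-edges}) as stated requires $|E|\le M$; this is covered by the hypothesis $M\ge |H|$, which yields $|J\setminus e_{i_{|H|}}| = |J\setminus e_x| = |H|-1 \le M$. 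Beyond these two observations the rest of the argument is a direct transplant of the \texttt{GPS} unbiasedness proof (Theorem~\ref{theo:gps}).
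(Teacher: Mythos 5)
Your proposal is correct and follows essentially the same route as the paper's proof in Appendix~\ref{app:proof-gps-a}: observe that the insertion dynamics of \texttt{GPS-A} coincide with those of \texttt{GPS} so that Eq.~(\ref{eq:gps-edges}) carries over, deduce $\mathbb{E}[X_{\texttt{GPS-A}}^{J}]=\mathbb{E}[Y_{\texttt{GPS-A}}^{J}]=1$, and conclude by linearity of expectation. Your explicit justification that the indicators $\mathbb{I}(e\in\mathcal{R}\setminus\mathcal{R}_{tag})$ reduce to $\mathbb{I}(e\in\mathcal{R})$ at the evaluation times is a point the paper leaves implicit, and is a welcome addition rather than a divergence.
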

%

{\KXREVIEW{
The unbiasedness of $c_{\texttt{GPS-A}}^{(t)}$ can be verified based on Eq.~(\ref{eq:gps-edges}). Detailed proof can be found in Appendix~\ref{app:proof-gps-a}.
}}
{\KXREVIEW{
\begin{theorem}[Complexities of \texttt{GPS-A}]
\label{theo:gps-a-complexity}
The time and space complexity of \texttt{GPS-A} is $O((|A|+|D|) \cdot \log M \cdot \gamma(M))$ and $O(M)$, respectively, where $|A|$ (resp. $|D|$) {\CHENG is the number} of insertion (resp. deletion) events in the stream, $M$ is the maximum size of the reservoir, and $\gamma(M)$ is the time complexity of enumerating the subgraphs {\CHENG formed by} the sampled edges.
\end{theorem}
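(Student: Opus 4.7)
The plan is to bound the per-event work separately for insertions and deletions, sum these bounds over the whole stream to obtain the time complexity, and then handle the space bound as an immediate consequence of the reservoir capacity.

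For the space complexity, I would argue that the reservoir maintained by \texttt{GPS-A} stores at most $M$ edges at any time, where each stored edge carries only a constant amount of auxiliary information (its rank, an index for the heap, and a one-bit ``DEL'' tag). Any supporting data structure, such as a hash index for $O(1)$ reservoir lookups and the min-priority queue keyed on ranks, has size $O(M)$ as well. Hence the total working memory is $O(M)$, independent of $|A|$ and $|D|$.

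For the time complexity, I would analyze the cost of handling an insertion event $(+, e_t)$ in three steps consistent with the \texttt{GPS}/\texttt{GPS-A} sampling process described in Section~\ref{subsec:gps} and Section~\ref{subsec:GPS-A}: (i) computing the weight $w(e_t) = W(e_t, \mathcal{R}^{(t-1)})$, which in the representative case amounts to enumerating the subgraphs isomorphic to $H$ formed by $e_t$ together with the current reservoir and therefore costs $O(\gamma(M))$; (ii) drawing $u \in (0,1]$ and evaluating $r(e_t) = f(w(e_t))$, each in $O(1)$; and (iii) a single pop/push on the size-$M$ min-heap of ranks, costing $O(\log M)$. Together these give $O(\gamma(M) + \log M)$, which is bounded by $O(\log M \cdot \gamma(M))$ since both factors are at least one. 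For a deletion event $(-, e_t)$, \texttt{GPS-A} merely attaches a ``DEL'' tag to $e_t$ in the reservoir if present (located in $O(1)$ via the hash index) or does nothing otherwise, which is trivially within the same bound. Summing over all $|A|+|D|$ events yields the claimed $O((|A|+|D|) \cdot \log M \cdot \gamma(M))$.

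The main subtlety, rather than a serious obstacle, is pinning down exactly what $\gamma(M)$ captures and why the weight computation is dominated by it. Because $W(e_t,\mathcal{R}^{(t-1)})$ is defined in terms of subgraph structures that $e_t$ would form with edges currently in the reservoir, each partial-pattern extension examined while evaluating $W$ corresponds to a step of the subgraph enumeration whose total work is $\gamma(M)$. Once this correspondence is made explicit, the aggregation over the stream is routine and both bounds follow.
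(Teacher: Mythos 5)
Your proposal is correct and follows essentially the same decomposition as the paper's proof: per-insertion cost of $\gamma(M)$ for the subgraph enumeration underlying the weight computation plus $O(\log M)$ for the priority-queue update, summed over all events, with the space bound following directly from the reservoir capacity. The one place your account diverges is the deletion case: you treat it as an $O(1)$ tag attachment, whereas the paper charges $\gamma(M)$ per deletion as well, because the estimator must enumerate the subgraphs destroyed by the deleted edge (to accumulate the $Y_{\texttt{GPS-A}}^J$ terms), and $O(\log M)$ to locate the edge in the priority queue rather than assuming a hash index. Since your smaller per-deletion cost still fits under the claimed upper bound, the conclusion is unaffected, but your description understates the work the full counting algorithm actually performs on a deletion event.
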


\begin{proof}
The time complexity is as follows. Consider an event of {\CHENG inserting} an edge $e=(u,v)$.
Finding the subgraphs formed by $e$ and some sampled edges would cost $\gamma(M)$. For example, for triangle counting, $\gamma(M)=O(|\mathcal{N}(u)|+|\mathcal{N}(v)|)=O(M)$ {\CHENG (which corresponds to the cost of computing the intersection between $\mathcal{N}(u)$ and $\mathcal{N}(v)$)}, where $\mathcal{N}(u)$ (resp. $\mathcal{N}(v)$) is the {\chengfinal set} of the neighbors of $u$ (resp. $v$) in the sampled graph. 
Then, after calculating $e$'s weight and rank, it takes $O(\log M)$ time to add the edge into a priority queue if it is included in the reservoir. Consider an event of edge deletion. Similarly, finding the subgraphs that are destroyed with this edge deletion would cost $\gamma(M)$. Then, we need to identify the position of the edge in the reservoir and assign it the tag if it is sampled, which costs $O(\log M)$. Thus, the total time cost is $O((|A|+|D|)\cdot \log M \cdot \gamma(M) )$. 
The space complexity is $O(M)$ since we can store at most $M$ edges in the reservoir. 
\end{proof}
}}

\smallskip
\noindent\textbf{Drawbacks of \texttt{GPS-A}.}
As mentioned before, \texttt{GPS-A} has an intrinsic drawback, i.e., those edges that have been included in the reservoir and then deleted would occupy some of storage of the reservoir without any benefits (for estimating the count of the subgraph structures). As the sampling process goes on, the reservoir would become smaller and smaller practically, which would result in low accuracy. Furthermore, it requires some extra space for storing the ``DEL'' tags.

\subsection{\texttt{WSD} Framework}
\label{subsec:wsd}

\begin{algorithm}[t]
\small
\DontPrintSemicolon
\SetNoFillComment
\caption{\texttt{WSD} Framework}
\label{alg:sampling}
\KwIn{An edge event stream $S=\{s^{(1)}, s^{(2)}, \cdots\}$.}
\KwOut{A reservoir $\mathcal{R}$ of sampled items. }
\SetKwFunction{insert}{insert}
\SetKwFunction{delete}{delete}
\SetKwProg{Fn}{function}{}{}
Let $\mathcal{R}$ be a min-priority queue with the maximum size $M$\;
$\mathcal{R} \gets \Phi$, $\tau_p \gets 0$, $\tau_q \gets 0$\;
\ForEach {$(op, e_t) \in S$} {
    \uIf {$op = +$} {
        \insert{$e_t$}\;
    }
    \Else {
        \delete{$e_t$}\;
    }
}

\Fn (\tcp*[f]{Case 1 \& Case 2}) {\insert{$e$}} {
    $w(e) \gets W(e, \mathcal{R})$\;
    $r(e) \gets f(w(e))$\;
    
    \uIf {$|\mathcal{R}| < M$} {    
        \If (\tcp*[f]{Case 1.1}) {$r(e) > \tau_p$} {
            $\mathcal{R} \gets \mathcal{R} \cup e$\;
        }
    }
    \Else  { 
        $e_m \gets$ the edge with the minimum rank in $\mathcal{R}$\;
        $\tau_p \gets r(e_m)$\; 
        \uIf (\tcp*[f]{Case 2.1}) {$r(e) > \tau_p$} {
            $\mathcal{R} \gets (\mathcal{R} \setminus e_m) \cup e$\;
            $\tau_q \gets \tau_p$\;
        }
        \ElseIf (\tcp*[f]{Case 2.2}) {$r(e) >\tau_q$}{
            $\tau_q \gets r(e)$\;
        }
    }
}

\Fn (\tcp*[f]{Case 3}) {\delete{$e$}} {         
    \If {$e \in \mathcal{R}$} {     
        $\mathcal{R} \gets \mathcal{R} \setminus e$\;
    }
}
\end{algorithm}

\noindent\textbf{Sampling Process.}
In \texttt{WSD}, we also use a min-priority queue with a fixed size of $M$ for storing the sampled edges. To avoid the drawbacks of \texttt{GPS-A}, when the deletion of an edge happens, we directly remove the edge from the reservoir if it exists in the reservoir. To ensure the correctness, we maintain two variables, namely $\tau_p$ and $\tau_q$, throughout the sampling process. 
\begin{itemize}
	\item \textbf{$\tau_p$.} We use it as a rank threshold such that for each insertion event $(+, e_t)$ at time $t$, we sample edge $e_t$ only if $e_t$'s rank is larger than $\tau_p$.
	
	\item \textbf{$\tau_q$.} It is a rank value for computing the probability with which an edge is sampled to the reservoir at the end of time $t$.
\end{itemize}
We maintain $\tau_q$ and $\tau_p$ appropriately such that at the end of time $t$, for each edge $e$ that has been inserted by time $t$ (inclusively) and not deleted yet, the probability $e$ is sampled in the reservoir is equal to the probability that $e$'s rank is larger than $\tau_q$, i.e., 

\begin{equation}
\label{eq:prob-new}
    \mathbb{P}[e\in\mathcal{R}^{(t)}] = 
    \mathbb{P}[r(e)>\tau_q]
\end{equation} 
{\KAIXIN Similarly, this probability depends on the rank function. If we adopt $r = f(w) = w/u$, then $\mathbb{P}[r > \tau_q]=\min\{1, w/\tau_q\}$.}
We will prove Eq.~(\ref{eq:prob-new}) formally later in Lemma~\ref{lemma:prob}.
{\CHENG{We note that $\tau_p$ (resp. $\tau_q$) is not always equal to the $M^{th}$ (resp. $(M+1)^{th}$) rank of those edges that have been inserted, which reflects one of the differences between \texttt{WSD} and \texttt{GPS-A}.}}

Specifically, we present the sampling process of \texttt{WSD} as follows. First, we initialize the queue to be empty and both variables $\tau_p$ and $\tau_q$ to be 0. Then, we present the sampling process in the following cases.
%
\begin{itemize}
	\item \textbf{Case 1: For an event $(+, e_t)$ with a \emph{non-full} reservoir.}
	
	We (1) assign $e_t$ a weight, $w( e_t)$, based on the reservoir and (2) compute the rank of edge $e_t$, i.e., $r( e_t)$, based on $w( e_t)$ and a random value $u$ randomly sampled in $(0, 1]$.
	\begin{itemize}
		\item \textbf{Case 1.1: $r( e_t) > \tau_p$.} We include edge $e_t$ in the reservoir. 
		\item \textbf{Case 1.2: $r( e_t) \le \tau_p$.} We discard edge $e_t$.
		
	\end{itemize}

	\item \textbf{Case 2: For an event $(+, e_t)$ with a \emph{full} reservoir.} 
	We {\chengfinal do (1) and (2) as in Case 1,}
	and (3) update $\tau_p$ to be minimum rank of the edges in the reservoir.
	\begin{itemize}
		\item \textbf{Case 2.1: $r(e_t) > \tau_{p}$.} We exclude the edge with the minimum rank,  include edge $e_t$ in the reservoir, and update $\tau_q$ to be $\tau_p$.
		\item \textbf{Case 2.2: $\tau_q <  r( e_t) \le \tau_p$.} We discard edge $e_t$, and update $\tau_q$ to be $r( e_t)$.
		\item \textbf{Case 2.3: $r( e_t) \le \tau_q$.} We discard the edge $e_t$.
	
	\end{itemize}

	\item \textbf{Case 3: For an event $(-, e_t)$.} We drop $e_t$ from the reservoir  if $e_t$ has been sampled before and do nothing otherwise.
\end{itemize}
The pseudo-code of \texttt{WSD} is presented in Algorithm~\ref{alg:sampling}.
%
We explain some of the intuitions behind the sampling process of \texttt{WSD}. \underline{First}, in Case 1, we do not update $\tau_p$ or $\tau_q$, which we explain as follows. In this case, the reservoir is not full, and thus no edges that have been included in the reservoir would be dropped. This implies that the probabilities of including these edges in the reservoir would not be changed, which further implies that $\tau_q$ should be retained since we aim to use $\tau_q$ for computing these probabilities, as shown in Eq.~(\ref{eq:prob-new}). 
Correspondingly, the rank threshold $\tau_p$ which has been used for sampling previous edges should be retained. 
%

To see this, we consider the same scenario {\chengfinal described} in Section~\ref{subsec:gps}. Let $p$ be the probability that an edge is included in the reservoir at time $t$. 
Let $\tau$ be the value of $\tau_p$ observed at $t-1$.
Let $\tau'$ (resp. $\tau''$) be the value of $\tau_p$ observed at $t'$ if the reservoir is non-full (resp. full). 
Let $p(\tau)$, $p(\tau')$ and $p(\tau'')$ be the probabilities that an edge's rank is no less than $\tau$, $\tau'$ and $\tau''$, respectively.
%
Consider an edge $e$ before $e_{t'}$. The probability that $e$ is included in the reservoir at $t'$ is $p\cdot p(\tau) + (1 - p) \cdot p(\tau'')$. 
For the term $p\cdot p(\tau)$, it corresponds to case that $e_{t}$ has been included in the reservoir at $t$ (which has the probability of $p$), and $p(\tau)$ corresponds to the conditional probability that $e$ is in the reservoir at $t-1$ (in this case, $e$ would be in the reservoir at $t'$ for sure since the reservoir is not full at $t'$). 
For the term $(1-p) \cdot p(\tau'')$, it corresponds the case that $e_{t}$ has not been included in the reservoir at $t$ (which has the probability of $(1-p)$, and in this case the reservoir is full at the beginning of $t'$ and $p(\tau'')$ corresponds to the conditional probability that $e$ is in the reservoir at $t'$. 
Consider the edge $e_{t'}$. The probability that $e_{t'}$ is included in the reservoir at $t'$ is $p \cdot p(\tau') + (1-p) \cdot p(\tau'')$, which can be verified similarly. 
%
%
%
{\CHENG{In order to guarantee that the probabilities for $e$ and $e_{t'}$ to be included are equal, it is necessary to have $\tau' = \tau$ (i.e., $\tau_p$ is retained). }}
%

\underline{Second}, in Case 2, we update $\tau_p$ to be the minimum rank of edges in the reservoir since the reservoir is full and for any edge $e_t$ to be included in the reservoir, its rank $r(e_t)$ should be larger than the minimum rank of existing edges in the reservoir. In addition, depending on the rank of $e_t$, we update the reservoir and $\tau_p$ correspondingly. In particular, in Case 2.1 and Case 2.2, we update $\tau_p$ appropriately so as to make sure Eq.~(\ref{eq:prob-new}) holds. 

\underline{Third}, in Case 3, we do not update $\tau_p$ or $\tau_q$ since after an edge is deleted, the probability that a remaining edge that has been inserted is included in the reservoir would not be affected. 



\begin{lemma}
\label{lemma:prob}
In \texttt{WSD}, at the end of a time $t$, the probability that an edge $e$ is sampled in the reservoir, i.e.,
$\mathbb{P}[e \in \mathcal{R}^{(t)}]$, is equal to $\mathbb{P}[r( e) > \tau_q]$, where $\tau_q$ is observed at time $t$. 
\end{lemma}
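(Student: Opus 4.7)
I would proceed by induction on $t$ to prove the invariant that, for every edge $e$ inserted by time $t$ and not yet deleted, $\mathbb{P}[e \in \mathcal{R}^{(t)}] = \mathbb{P}[r(e) > \tau_q^{(t)}]$. The base case $t=0$ is vacuous since no edges have been inserted and $\tau_q = 0$. For the inductive step, I would split into the three event types handled by Algorithm~\ref{alg:sampling}.

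\textbf{Easy cases.} A deletion (Case 3) leaves both $\tau_p$ and $\tau_q$ unchanged and affects only the deleted edge's reservoir-membership, so the invariant for every other alive edge transfers directly from the inductive hypothesis. An insertion into a non-full reservoir (Case 1) also leaves $\tau_p$ and $\tau_q$ unchanged; for existing alive edges this is again immediate, while for the newly inserted $e_t$ we have $e_t \in \mathcal{R}^{(t)}$ iff $r(e_t) > \tau_p^{(t)}$, so the required claim reduces to the side-invariant $\tau_p = \tau_q$ whenever Case 1 is triggered. I would prove this side-invariant in parallel by tracking how $\tau_p$ and $\tau_q$ co-evolve: they are initialized equal to $0$, equalized by Case 2.1, and then preserved through the only routes back to a non-full reservoir once it has been filled (Case 3 followed by Case 1).

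\textbf{Substantive case.} The bulk of the work is in Case 2, after $\tau_p$ has been reset to the current minimum rank in $\mathcal{R}$. For the arriving $e_t$, the claim follows directly from the sub-case conditions: in Case 2.1, $e_t \in \mathcal{R}^{(t)}$ iff $r(e_t) > \tau_p^{(t)} = \tau_q^{(t)}$; in Cases 2.2 and 2.3, $e_t \notin \mathcal{R}^{(t)}$ together with $r(e_t) \le \tau_q^{(t)}$. For existing alive edges the argument is more delicate. In Case 2.1 the evicted $e_m$ satisfies $r(e_m) = \tau_q^{(t)}$, so the invariant is newly restored for $e_m$; the remaining reservoir edges continue to have rank strictly exceeding $\tau_q^{(t)}$, and combining the IH with the monotonicity of $\tau_q$ closes the argument. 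In Cases 2.2 and 2.3 the reservoir is unchanged but $\tau_q$ may be lifted to $r(e_t)$; the probability that an alive edge lies in the reservoir must then be shown to drop from $\mathbb{P}[r(e) > \tau_q^{(t-1)}]$ down to $\mathbb{P}[r(e) > \tau_q^{(t)}]$, which I would establish via a conditional coupling over the independent $u$-draws used to form the ranks. Verifying this coupling (and in particular that the lifted threshold $\tau_q^{(t)}$ is truly consistent with the sampling history when integrated over all past randomness) is the step I expect to be the main obstacle; the deletion and Case 1 analyses are essentially bookkeeping in comparison.
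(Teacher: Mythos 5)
Your overall skeleton---induction on $t$ with a case split mirroring Cases 1--3 of Algorithm~\ref{alg:sampling}, deletions and non-full insertions treated as bookkeeping, the full-reservoir insertion carrying the real content---is the same as the paper's. However, there is a genuine gap in your treatment of the newly inserted edge in Case 1. You reduce that sub-case to a side-invariant that $\tau_p = \tau_q$ whenever a non-full insertion occurs, and you claim the two thresholds are ``equalized by Case 2.1 and then preserved through the only routes back to a non-full reservoir.'' This side-invariant is false. Case 2.2 sets $\tau_q \gets r(e_t)$ with $\tau_q < r(e_t) \le \tau_p$, leaving $\tau_q$ generically \emph{strictly} below $\tau_p$; likewise, a Case 2.3 (or any later Case 2) event after an eviction raises $\tau_p$ to the new minimum rank of the reservoir, which strictly exceeds the $\tau_q$ recorded at the last eviction, without touching $\tau_q$. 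A subsequent deletion (Case 3) then re-opens the reservoir while leaving both thresholds unchanged, so the next Case-1 insertion is triggered with $\tau_p > \tau_q$. Your argument that the new edge satisfies $\mathbb{P}[e_t \in \mathcal{R}] = \mathbb{P}[r(e_t) > \tau_q]$ because it is admitted exactly when $r(e_t) > \tau_p$ therefore does not close. The paper argues this sub-case differently: it appeals to the fact that $\tau_p$ is \emph{retained} across the non-full insertion, so that $e_t$ is sampled with the same probability as the earlier equal-weight edges that faced the same threshold, and those edges satisfy the invariant by the inductive hypothesis (this is also the content of the extended two-term probability computation given in the discussion preceding the lemma, which motivates why $\tau_p$ must not be reset in Case 1).

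Two smaller remarks. First, your base case should be $t \le M$ (every edge has positive rank, hence exceeds $\tau_p = \tau_q = 0$ and is admitted with probability $1$), not just $t = 0$; the paper's induction starts at $k \ge M$. Second, your plan for existing edges in Cases 2.2/2.3---showing the membership probability drops from $\mathbb{P}[r(e) > \tau_q^{(t-1)}]$ to $\mathbb{P}[r(e) > \tau_q^{(t)}]$ via a coupling over the $u$-draws---is more explicit than what the paper writes (the paper disposes of existing edges in Case 2 with a one-line ``it can be verified''), and you are right that this is where the real probabilistic content lies; but as written it is a plan rather than a proof, and it does not repair the Case-1 issue above.
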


\begin{proof}
We prove this lemma by deduction. 
When $t \le M$, $\tau_q$ is always 0. Each edge would be included into reservoir for sure, which implies that Eq.~(\ref{eq:prob-new}) holds. 
Assume that Eq.~(\ref{eq:prob-new}) holds for $t = k$ $(k \ge M)$, i.e., $\mathbb{P}[e \in \mathcal{R}^{(k)}] = \mathbb{P}[r( e) > \tau_q^{(k)}]$, 
{\CHENG{where $\mathcal{R}^{(k)}$ and $\tau_q^{(k)}$ correspond to $\mathcal{R}$ and $\tau_q$ as observed at time $k$, respectively.}}
Consider $t = k + 1$. 
%
There are three possible cases of how the reservoir may change (cases as defined before). 
\begin{itemize}
    %
    \item \textbf{Case 1.} 
    Consider an edge $e$ {\CHENG{(before $e_{k+1}$)}}. Since (1) no edges that have been included in the reservoir would be dropped, which implies that the probabilities of including these edges in the reservoir would not be changed, and (2) we hold $\tau_q$, we have $\mathbb{P}[e \in \mathcal{R}^{(k+1)}] = \mathbb{P}[e\in \mathcal{R}^{(k)}] =  \mathbb{P}[r( e) > \tau_q^{(k)}] =  \mathbb{P}[r( e) > \tau_q^{(k+1)}]$. 
    Consider $e_{k+1}$. Since we hold $\tau_p$, we would sample $e_{k+1}$ with the exactly the same probability as we sample those  edges before $e_{k+1}$ with equal weights. Therefore, Eq.~(\ref{eq:prob-new}) holds.
    
    \item \textbf{Case 2.}
    Consider $e_{k+1}$. We compare $r(e_{k+1})$ with $\tau_p$, which corresponds to the minimum rank of the edges in {\CHENG{$\mathcal{R}^{(k)}$}}. In case that $e_{k+1}$ is included into the reservoir, we update $\tau_q^{(k+1)}$ to $\tau_p$. Therefore, at the end of time $t=k+1$, $\mathbb{P}[e_{k+1} \in \mathcal{R}^{(k+1)}] = \mathbb{P}[r( e_{k+1}) > \tau_p] = \mathbb{P}[r( e_{k+1}) > \tau_q^{(k+1)}]$. 
    %
    Consider an edge $e$ {\CHENG{(before $e_{k+1}$)}}. It can be verified that at the end of time $t$, 
    {\CHENG{$e$ is included in the reservoir}} only if its rank is larger than $\tau_q$, i.e., $\mathbb{P}[e \in \mathcal{R}^{(k+1)}] = \mathbb{P}[r(e) > \tau_q]$. Thus, Eq.~(\ref{eq:prob-new}) holds. 
    
    \item \textbf{Case 3. }
    Since (1) the deletion of $e_{k+1}$ would not affect other edges' probabilities, and (2) we do not update $\tau_q$. Consider an edge $e$ {\CHENG{(before $e_{k+1}$)}}. We have $\mathbb{P}[e \in \mathcal{R}^{(k+1)}] = \mathbb{P}[e\in \mathcal{R}^{(k)}] =  \mathbb{P}[r( e) > \tau_q^{(k)}] =  \mathbb{P}[r( e) > \tau_q^{(k+1)}]$. That is, Eq.~(\ref{eq:prob-new}) holds.
\end{itemize}
Therefore, $\forall t$, Eq~(\ref{eq:prob-new}) holds. 
\if 0
%
First, an edge insertion event happens when the reservoir is full. Consider the new edge and $M$ edges in the reservoir. We compare the smallest rank of these $M+1$ edges with $\tau_e$, and update $\tau_e$ with the larger one. And then we discard the edge with the smallest rank among $M+1$ edges. The inclusion probability is Eq.~(\ref{eq:prob}), which has been verified in \cite{duffield2007priority} (for the insertion-only case). 
Second, an edge deletion event happens. On one hand, as stated before, we simply remove the corresponding edge if it exists in the reservoir. Hence, other sampled edges would be still in the reservoir for sure, and as a result, their probabilities would not change. On the other hand, due to the fact that we do not change $\tau_e$ after the deletion, $p(w,\tau_e)$ would not change. Thus, Eq~(\ref{eq:prob}) still holds after the deletion event.
Third, an edge insertion event happens when the reservoir is not full. Consider the edges which are already in the reservoir. Following \cite{gemulla2006dip}, we maintain that the probabilities of these edges do not decrease after this event. Thus, the probability of each edge after this insertion event should be the same as the corresponding probability after the last time the reservoir is full, which has already been analyzed in the first case. Then because we do not update the value of $\tau_e$ in this case, Eq~(\ref{eq:prob}) holds for these edges. Then consider the new edge. Without loss of generality, we assume that there exists an edge with the same weight which is already in the reservoir. If the new edge is added into the reservoir, it should have the same probability as that of the edge which is already in the reservoir. Thus, Eq~(\ref{eq:prob}) also holds for the new edge. 
\fi
\end{proof}
%
%


\smallskip
\noindent\textbf{Estimator and Analysis.}
%
Let $J=\{e_{i_1}, e_{i_2}, \cdots, e_{i_{|H|}}\}$ be a subgraph formed at time $t_a(J)$. Note that $t_a(J)$ is the time at which the last edge of $J$ appears, i.e., $t_a(J)=i_{|H|}$. We define a random variable $X_{\texttt{WSD}}^{J}$ for $J$ as follows.
\begin{equation}
    X_{\texttt{WSD}}^J= \displaystyle\prod \limits_{e \in J \setminus e_{i_{|H|}}} \frac{\mathbb{I}(e\in \mathcal{R})}{\mathbb{P}[r( e) > \tau_q]}
    \label{eq:est-1}
\end{equation}
where $\mathcal{R}$ and $\tau_q$ are the reservoir and the rank value, as observed just after time $t_a(J)-1$. 
%
Assume that a subgraph $J$ is destroyed at time $t_d(J)$ when a deletion event happens on an edge $e_x$ of $J$. We define another random variable $Y_{\texttt{WSD}}^J$ for $J$ as follows. 
\begin{equation}
    Y_{\texttt{WSD}}^J= \displaystyle\prod \limits_{e \in J \setminus e_x} \frac{\mathbb{I}(e\in\mathcal{R})}{\mathbb{P}[r( e) > \tau_q]}
    \label{eq:est-2}
\end{equation}
where $\mathcal{R}$ and $\tau_q$ are the reservoir and the rank value, as observed just after time $t_d(J)-1$. 
Based on these two subgraph estimators, we define an estimator of the count of subgraph structures at any time $t$ as follows.
\begin{equation}
    c_{\texttt{WSD}}^{(t)} = \sum_{J \in \mathcal{A}^{(t)}} X_{\texttt{WSD}}^J - \sum_{J \in \mathcal{D}^{(t)}} Y_{\texttt{WSD}}^J
\end{equation}
where $\mathcal{A}^{(t)}$ (resp. $\mathcal{D}^{(t)}$) is set of subgraphs which are isomorphic to $H$ and have been added to (resp. deleted from) the graph $G^{(t)}$ by time $t$.

\begin{theorem}[Unbiasedness of the subgraph count estimator of \texttt{WSD}] 
\label{theo:wsd}
Given the graph stream $S$ and $M\geq |H|$, 
$\forall t$,
we have
\begin{equation}
    \mathbb{E}[c_{\texttt{WSD}}^{(t)}] = |\mathcal{J}^{(t)}|.
\end{equation}
\end{theorem}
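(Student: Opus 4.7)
The plan is to reduce the theorem to a joint-inclusion factorization lemma and then verify $\mathbb{E}[X^J_{\texttt{WSD}}]=\mathbb{E}[Y^J_{\texttt{WSD}}]=1$ by a direct calculation. By linearity of expectation,
\begin{equation*}
\mathbb{E}[c_{\texttt{WSD}}^{(t)}] = \sum_{J \in \mathcal{A}^{(t)}} \mathbb{E}[X_{\texttt{WSD}}^J] \;-\; \sum_{J \in \mathcal{D}^{(t)}} \mathbb{E}[Y_{\texttt{WSD}}^J],
\end{equation*}
so if each summand equals $1$, then $\mathbb{E}[c_{\texttt{WSD}}^{(t)}] = |\mathcal{A}^{(t)}| - |\mathcal{D}^{(t)}| = |\mathcal{J}^{(t)}|$, which is the claim.

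The main tool I would establish is the \texttt{WSD} analog of Eq.~(\ref{eq:gps-edges}): for any set of edges $E$ with $|E|\le M$ that have all been inserted and are still live at a reference time $\tau$,
\begin{equation*}
\mathbb{P}\bigl[E \subset \mathcal{R}^{(\tau)}\bigr] \;=\; \prod_{e \in E} \mathbb{P}\bigl[r(e) > \tau_q^{(\tau)}\bigr].
\end{equation*}
Lemma~\ref{lemma:prob} is the $|E|=1$ version of this statement. To promote it to the joint version, I would induct on $\tau$ using the same three-case analysis used in the proof of Lemma~\ref{lemma:prob}. In Case~1 the reservoir is non-full, nothing is evicted, and $\tau_q$ is not updated, so the joint event over previously live edges is preserved while the newly inserted edge contributes an independent factor determined by its own uniform draw $u$. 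In Case~2 the threshold $\tau_p$ is the $M$-th rank among edges currently in the reservoir, so conditioning on those ranks and exploiting the independence of the uniform $u$ attached to $e_{k+1}$ lets us factor $\{r(e_{k+1}) > \tau_p\}$ out of the joint event over the other live edges; the update $\tau_q \leftarrow \tau_p$ (resp.\ $\tau_q \leftarrow r(e_{k+1})$) is exactly what keeps the product identity intact. In Case~3 an edge is simply removed, neither $\tau_p$ nor $\tau_q$ changes, and the joint event over the surviving edges inherits the factorization directly from the inductive hypothesis.

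Given this joint factorization, the computation is mechanical. For $J=\{e_{i_1},\dots,e_{i_{|H|}}\} \in \mathcal{A}^{(t)}$, every edge in $J \setminus e_{i_{|H|}}$ has been inserted by time $t_a(J)-1$ and none can have been deleted by then (otherwise $J$ would not form at $t_a(J)$), so the joint lemma applies at reference time $t_a(J)-1$ and
\begin{equation*}
\mathbb{E}[X_{\texttt{WSD}}^J] \;=\; \frac{\mathbb{P}\bigl[(J \setminus e_{i_{|H|}}) \subset \mathcal{R}\bigr]}{\prod_{e \in J \setminus e_{i_{|H|}}} \mathbb{P}[r(e) > \tau_q]} \;=\; 1.
\end{equation*}
The identical argument at reference time $t_d(J)-1$ gives $\mathbb{E}[Y_{\texttt{WSD}}^J] = 1$, since the edges in $J \setminus e_x$ are all live just before the deletion of $e_x$ that destroys $J$. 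Substituting back completes the proof.

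The hard part will be the inductive proof of the joint factorization lemma, and within it Case~2 in particular: the threshold $\tau_p$ there is itself a function of the ranks of other live edges, so a naive appeal to independence of marginals is not enough. One has to condition on the vector of ranks of the edges already in the reservoir, exploit the fact that the uniform draw $u$ for the arriving edge is independent of that history, and then check that the bookkeeping update of $\tau_q$ (either to $\tau_p$ or to $r(e_{k+1})$) is precisely the value for which the joint probability still factors as a product of the single-edge probabilities $\mathbb{P}[r(e) > \tau_q^{(k+1)}]$. Once that is done, the expectation computation above is immediate.
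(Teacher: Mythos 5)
Your proposal follows essentially the same route as the paper's proof in Appendix~\ref{app:proof-wsd}: an inductive joint-inclusion factorization $\mathbb{P}[E\subset\mathcal{R}^{(t)}]=\prod_{e\in E}\mathbb{P}[r(e)>\tau_q]$ proved via the same three-case analysis (including the separate treatment of the newly arrived edge in Case~2), followed by $\mathbb{E}[X_{\texttt{WSD}}^J]=\mathbb{E}[Y_{\texttt{WSD}}^J]=1$ and linearity of expectation. The approach and the identified technical crux (Case~2, where $\tau_p$ depends on the ranks of the resident edges) match the paper's argument, so the proposal is correct as a blueprint.
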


{\KXREVIEW{
We provide a sketch of the proof here and put the details in Appendix~\ref{app:proof-wsd}.  We first prove by deduction that at the end of time $t$ for all cases, the probability that a set of edges $E=\{e_1, \cdots, e_{|E|}\}$ $(|E| \le M)$ is included in $\mathcal{R}^{(t)}$ is
\begin{equation}
\label{eq:set}
    \mathbb{P}[E\subset \mathcal{R}^{(t)}] = \displaystyle\prod\limits_{e\in E} \mathbb{P}[e\in \mathcal{R}^{(t)}] = \displaystyle\prod\limits_{e\in E} \mathbb{P}[r(e)>\tau_q], 
\end{equation}
where $\tau_q$ is observed at time $t$. 
Then, we prove the following two statements based on Eq~(\ref{eq:set}). 
\begin{equation}
    \mathbb{E}[X_{\texttt{WSD}}^{J}] = 1, \quad \mathbb{E}[Y_{\texttt{WSD}}^{J}] = 1. 
\end{equation}
Finally, based on the linearity of expectation, we have 
\begin{equation}
\begin{aligned}
    & \mathbb{E}[c_{\texttt{WSD}}^{(t)}] = \sum_{J\in \mathcal{A}^{(t)}} \mathbb{E}[X_{\texttt{WSD}}^{J}] - \sum_{J\in \mathcal{D}^{(t)}} \mathbb{E}[Y_{\texttt{WSD}}^{J}] \\
    &= \sum_{J \in \mathcal{A}^{(t)}} 1 - \sum_{J \in \mathcal{D}^{(t)}} 1 
    = |\mathcal{A}^{(t)}| - |\mathcal{D}^{(t)}| =  |\mathcal{J}^{(t)}|.
\end{aligned}
\end{equation}
}}
\smallskip
\noindent\textbf{Implementation of the estimator.}
The subgraph counting on fully dynamic graph streams with \texttt{WSD} is shown in Algorithm~\ref{alg:counting}. 
We initialize the counter $c$ as 0 at the beginning. When an edge insertion (resp. deletion) event happens, we check whether the edge forms some subgraph structures with the sampled edges. If yes, we add (resp. subtract) the product of the inverse probabilities of these sampled edges, as defined in Eq.~(\ref{eq:est-1}) (resp. Eq.~(\ref{eq:est-2})), to (resp. from) the counter. Then we update the reservoir based on \texttt{insert} (resp. \texttt{delete}) function defined in Algorithm~\ref{alg:sampling}. 

\begin{algorithm}[t]
\small
\DontPrintSemicolon
\SetNoFillComment
\caption{Subgraph Counting with \texttt{WSD}}
\label{alg:counting}
\KwIn{An edge event stream $S$ {\KXREVIEW{and a subgraph pattern $H$}}.}
\KwOut{{\KXREVIEW{
Estimated subgraph count $c$}}. }
\SetKwFunction{insert}{insert}\SetKwFunction{delete}{delete}
Let $\mathcal{R}$ be a priority queue with the maximum size $M$\;
$\mathcal{R} \gets \Phi$, $\tau_p\gets 0$, $\tau_q \gets 0$, $c\gets 0$\;
\ForEach {$(op, e_t) \in S$} {

    $\mathcal{H} \gets \{J \subset (\mathcal{R} \cup e_t) \mid e_t\in J, J \cong H\}$\;
    \ForEach{$J \in \mathcal{H}$} {
        \uIf {$op = +$} {
            $c\gets c+\displaystyle\prod \limits_{e \in J \setminus e_t} \frac{1}{\mathbb{P}[r(e)>\tau_q]}$\;
        }
        \Else {
            $c\gets c-\displaystyle\prod \limits_{e \in J \setminus e_t} \frac{1}{\mathbb{P}[r(e)>\tau_q]}$\;
        }

    }
    \uIf {$op = +$} {
        \insert{$e_t$}\;
    }
    \Else {
        \delete{$e_t$}\;
    }
}
\Return $c$\;
\end{algorithm}



{\KXREVIEW{
\begin{theorem}[Complexities of \texttt{WSD}]
\label{theo:wsd-complexity}
The time and space complexity of \texttt{WSD} is $O((|A|+|D|) \cdot \log M \cdot \gamma(M))$ and $O(M)$, respectively, where all notations have the same meanings as those in Theorem~\ref{theo:gps-a-complexity}. 
\end{theorem}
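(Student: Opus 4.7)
The plan is to mirror the argument given for Theorem~\ref{theo:gps-a-complexity}, since the per-event work in \texttt{WSD} differs from \texttt{GPS-A} only by $O(1)$ bookkeeping on the two scalars $\tau_p$ and $\tau_q$, plus the swap from a ``tag'' operation to a physical removal on deletion. I would first dispose of the space bound quickly: the reservoir holds at most $M$ edges by construction (the min-priority queue has fixed capacity $M$), and we store exactly two additional real numbers $\tau_p, \tau_q$, so the total space is $O(M) + O(1) = O(M)$. No per-edge tag storage is needed, in contrast with \texttt{GPS-A}, which only strengthens the constant.

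For the time bound, I would analyze the two event types separately, reusing the same decomposition as in the proof of Theorem~\ref{theo:gps-a-complexity}. For an insertion event on $e=(u,v)$, the estimator update in Algorithm~\ref{alg:counting} enumerates all subgraphs in $\mathcal{R}\cup\{e\}$ that contain $e$ and are isomorphic to $H$; this costs $\gamma(M)$ by definition. The sampling step then (i) computes $w(e)=W(e,\mathcal{R})$ and $r(e)=f(w(e))$ in time subsumed by $\gamma(M)$, (ii) performs one comparison against $\tau_p$ (or updates $\tau_p$ to the current minimum rank in $\mathcal{R}$, which is the root of the min-heap and thus $O(1)$ to read), and (iii) performs at most one insert/extract-min on the min-priority queue, each $O(\log M)$. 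Updating $\tau_q$ in Cases 2.1 and 2.2 is $O(1)$. The aggregate insertion cost is therefore $O(\log M \cdot \gamma(M))$ per insertion event, matching \texttt{GPS-A}.

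For a deletion event on $e_t$, enumerating the destroyed occurrences of $H$ in $\mathcal{R}$ to update the counter costs $\gamma(M)$. The remaining task, removing $e_t$ from $\mathcal{R}$ if present, is the step I expect to be the main subtlety, since a vanilla min-heap does not support keyed deletion in $O(\log M)$. I would handle this by assuming the standard implementation in which the priority queue is augmented with a hash index from edges to heap positions; the lookup is $O(1)$ expected, and the sift-up/down after swapping in the last element is $O(\log M)$. Neither $\tau_p$ nor $\tau_q$ is touched in Case~3, so the deletion cost is $O(\log M \cdot \gamma(M))$ as well. Summing over the $|A|$ insertions and $|D|$ deletions in the stream yields the stated total time bound $O((|A|+|D|)\cdot \log M \cdot \gamma(M))$.

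I would close by noting two sanity checks: first, the complexity matches Theorem~\ref{theo:gps-a-complexity} precisely, which is expected because \texttt{WSD} replaces the tagging on deletion with one heap-removal and adds only constant-time scalar updates; second, the bound is independent of any stream-length knowledge, so it respects the ``no knowledge'' constraint of Definition~\ref{def:problem}. The only non-routine ingredient in the proof is justifying $O(\log M)$ keyed deletion from the priority queue, which is handled by the hash-indexed heap implementation noted above.
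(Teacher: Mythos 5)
Your proposal is correct and takes essentially the same route as the paper, which proves this theorem in one line by deferring to the argument for Theorem~\ref{theo:gps-a-complexity}; you simply carry out that adaptation explicitly. Your added care about supporting keyed deletion from the min-priority queue in $O(\log M)$ (via a hash-indexed heap) addresses a detail the paper's \texttt{GPS-A} proof already implicitly assumes when it locates a tagged edge in the reservoir, so nothing in your argument diverges in substance.
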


\begin{proof}
{\chengfinal They can be verified similarly as those of \texttt{GPS-A}.}
\end{proof}
}}

{\KXREVIEW{
\smallskip\noindent
\textbf{Remarks.} (1) \texttt{WSD} v.s. \texttt{GPS-A}. While they have the same time complexity, \texttt{WSD} would run slightly faster than \texttt{GPS-A} in practice. 
This is because the reservoir of \texttt{GPS-A} is always full when $t\ge M$ while the reservoir of \texttt{WSD} could be smaller than $M$ after {\chengfinal some} deletion events happen. For  subsequent insertion events, it would take more time to add an edge to the reservoir of \texttt{GPS-A} than to that of \texttt{WSD}. 
(2) \texttt{WSD} v.s. Existing algorithms \texttt{Triest} \cite{stefani2017triest}, \texttt{ThinkD} \cite{shin2018think} and \texttt{WRS} \cite{lee2020temporal}. 
Existing algorithms all have a time complexity of $O((|A|+|D|\cdot M) \cdot \gamma(M))$. Thus, when the number of deletions satisfies 
\begin{equation}
    |D| > \frac{\log M-1}{M- \log M} \cdot |A| \approx \frac{\log M}{M} \cdot |A|,
\end{equation}
\texttt{WSD} would run faster than existing algorithms. 
}}

\section{Reinforcement Learning based Weight Function}
\label{sec:rl}


Recall that in the \texttt{WSD} algorithm, one step is to set the weight of an edge $e$ when it is inserted given the current reservoir $\mathcal{R}$. This step is modelled as a weight function $W(e, \mathcal{R})$. 
Due to the online nature and the one-pass sampling setting, the weights cannot be set optimally for optimizing some objective (e.g., minimizing the variance of the estimator). A common practice is to set the weights using some heuristics. For example, one heuristic is to use the number of subgraph structures that would be newly formed by the new edge as an indicator of the importance of the edge - the larger the number is, the higher the weight is~\cite{ahmed2017sampling}. While this heuristic is intuitive enough, it cannot be adaptive to different datasets and/or the underlying dynamics of edges. According to our experimental results, this heuristic based cannot work stably when the order of edge insertions/deletions is changed. 

We observe that whenever an edge is inserted, \texttt{WSD} needs to decide the weight of the edge. In addition, the weights decided for different edges would collectively affect the sampling process (i.e., the probabilities that the edges are sampled). This naturally {\chengfinal triggers} us to think of the reinforcement learning (RL) for this task of deciding the weights during the sampling process.
Note that RL is well known for effectively making sequential decisions so as to optimize an objective that is collectively affected by the decisions made at each step~\cite{puterman2014markov}. Next, we present our RL based method for this task by formalizing it as {\chengfinal an} Markov decision process (MDP) \cite{puterman2014markov} in Section~\ref{subsec:mdp} and then presenting the algorithm for learning the policy based on the MDP (which would be then used for deciding the weights of edges during the {\chengfinal sampling} process) {\chengfinal in Section~\ref{subsec:policy-learning}}. 

\subsection{Weight Function Modeled as an MDP}
\label{subsec:mdp}
Let $t_1, t_2, \cdots$ be the time steps, at which the edge insertion events happen. 
We define {\chengfinal an} MDP, which consists of (1) states, (2) actions, (3) transitions and (4) rewards, as follows.

\smallskip
\noindent
\textbf{(1) States}. We denote the state at time $t_k$ by $s_k$. Intuitively, the state $s_k$ should capture essential information of the edge in the whole graph $G^{(t_k)}$. Since we have no access to the edges which have not been sampled in the reservoir, we rely on the edges in the reservoir $\mathcal{R}$ and the new edge $e$. We identify two types of information, namely the topological information and the temporal information, for defining the state $s_k$. 

To capture the topological information of the new edge $e=(u,v)$, we take  the number of subgraph patterns which are newly formed by $e$ and some edges sampled in the reservoir, denoted by $|\mathcal{H}_k|$ (which is computed in line 4 of Algorithm~\ref{alg:counting}). Besides, we take the number of the neighbors of $u$ (resp. $v$) in the sampled graph, denoted by $|\mathcal{N}_k(u)|$ (resp. $|\mathcal{N}_k(v)|$). We concatenate these values together and form a vector, denoted by $s_k^g$, that is, 
\begin{equation}
    s_k^g = [|\mathcal{H}_k|, |\mathcal{N}_k(u)|, |\mathcal{N}_k(v)|]. 
\end{equation}
The rationale of $s_k^g$ is as follows. The number of the subgraphs $|\mathcal{H}_k|$ indicates the importance of the edge $e$ to some extent - if the edge $e$ can form more subgraphs in the sampled graph, it has a larger probability to form more subgraphs in the complete graph. The numbers of neighbors ($|\mathcal{N}_k(u)|$ and $|\mathcal{N}_k(v)|$) indicate the potential of the edge $e$ for forming subgraphs in the future. For example, in the example discussed in Section~\ref{sec:intro}, famous people can form more triangles than the general public since they have more connections with others. For other subgraphs, similarly, the edges that are adjacent to more others tend to have higher chance to form subgraphs.

To capture the temporal information of the new edge $e=(u,v)$, we consider all subgraphs that are formed by the edge $e$, i.e., $\mathcal{H}_k$. For each such subgraph, it is associated with an ordered set of edges, i.e., $J=\{e_{i_1}, e_{i_2}, \cdots, e_{i_{|H|}}\}$, where $i_1<i_2<\cdots<i_{|H|}=t_k$. Note that $e_{i_{|H|}}=e$. 
Then for each entry $j\in[1,|H|]$, we take the maximum value $i_{j}$ among these $|\mathcal{H}_k|$ subgraphs and denote it by $v_j$, that is,
\begin{equation}
\label{eq:vj}
    v_j = \max\{i_j \mid e_{i_j} \in J, J\in \mathcal{H}_k \}.
\end{equation}
Finally, we concatenate these values $v_j$ for $j\in [1, |H|]$ and form a vector, denoted by $s_k^v$, that is
\begin{equation}
\label{eq:sv}
    s_k^v = [v_1, v_2, \cdots, v_{|H|}]. 
\end{equation}
The rationale of $s_k^v$ is that it provides the temporal information related to edge $e$. For example, if the elements of $s_k^v$ are large, we know that the subgraphs are more likely to be formed with recent edges, and thus sampling edge $e$ with a larger probability may help form more subgraphs in the near future.


In summary, we define the state $s_k$ to be a $(|H|+3)$-dimension vector, which captures both topological and temporal information, as follows. 
\begin{equation}
    s_k = [s_k^g, s_k^v] \in \mathbb{R}^{|H|+3}. 
\end{equation}

\smallskip
\noindent
\textbf{(2) Actions}. We denote the action at time $t_k$ by $a_k$. Recall that at time $t_k$, the state is $s_k$, and we need to assign a weight for the edge $e$ to represent its importance. Formally, we define $a_k$ as follows.
\begin{equation}
    a_k = w \quad (w>0), 
\end{equation}
which means we assign a positive real weight $w$ to the edge, i.e., $W(e, \mathcal{R}) = w$.

\smallskip
\noindent
\textbf{(3) Transitions}. Let $s_k$ be a state and suppose that we take an action $a_k = w$, which assigns the edge $e$ a weight $w$. Then the \texttt{WSD} algorithm would proceed with the weight of $e$ until a new edge insertion arrives. Then, we arrive a new state and compute the state $s_{k+1}$ accordingly.

\smallskip
\noindent
\textbf{(4) Rewards}. We first define the error between the estimation and the ground truth at time $t$, denoted by $\epsilon(t)$, as follows.
\begin{equation}
    \epsilon(t) = |c^{(t)} - |\mathcal{J}^{(t)}||, 
\end{equation}
where $c^{(t)}$ is the value of the subgraph estimation and $|\mathcal{J}^{(t)}|$ is the value of the ground truth of the subgraph counts. 
Consider that we take action $a_k$ at a state $s_k$ and then we arrive at a new state $s_{k+1}$. We define the reward associated with this transition from $s_k$ to $s_{k+1}$, denoted by $r_k$, as follows, 
\begin{equation}
    r_k = \epsilon(t_k) - \epsilon(t_{k+1}), 
\end{equation}
The intuition is that if the error between the estimation and the ground truth resulted from the reservoir after processing the edges between $t_k$ and $t_{k+1}$, i.e., $\epsilon(t_{k+1})$, is smaller, then the reward is larger. It is worthy of noting that with the reward defined as above, the objective of the MDP, which is to maximize the accumulative rewards, would be equivalent to that of the subgraph counting problem, which is to minimize the estimation error. To see this, suppose that we go through a sequence of states $s_1, s_2, \cdots, s_N$ and correspondingly we receive a sequence of rewards $r_1, r_2, \cdots, r_{N-1}$. In the case that the future rewards are not discounted, we have
\begin{equation}
    \sum_{i=1}^{N-1} r_i = \sum_{i=1}^{N-1} (\epsilon(t_i) - \epsilon(t_{i+1})) = \epsilon(t_1) - \epsilon(t_{N}) = - \epsilon(t_{N})
\end{equation}
Note that $\epsilon(t_1)=0$ because at the very beginning, the estimation and the ground truth are both 0, and $\epsilon(t_N)$ corresponds to the error of the subgraph counts at time $t_N$.

\subsection{Policy learning}
\label{subsec:policy-learning}


The core problem of an MDP is to find an optimal policy for the agent, which corresponds to a function that specifies the action that the agent should choose in a specific state so as to maximize the accumulative rewards.
{\KXREVIEW{In our MDP, the states are high dimensional vectors, and the actions are in a continuous domain. Therefore, we {\CHENG adopt} DDPG \cite{lillicrap2016continuous}, an actor-critic method, to solve our MDP as it targets the same setting. Specifically, DDPG {\CHENG maintains} two {\CHENG main} networks. One is an actor network $\mu(s;\theta)$, parameterized by $\theta$, which specifies the policy by deterministically mapping a state to a specific action,}} 
\begin{equation}
    a = \mu(s; \theta) = \sigma(\bm{W} s+\bm{b})
\end{equation}
where $\theta$ denotes the parameters $\{\bm{W}, \bm{b}\}$, and $\sigma$ denotes the activation function. 
{\KXREVIEW{
The other is a critic network $Q(s,a; \phi)$, parameterized by $\phi$, which approximates the expected accumulative rewards the agent would receive by following any policy after seeing state $s$ and taking action $a$.}}

For the training process, DDPG initializes two main networks $\mu(s; \theta)$ and $Q(s,a; \phi)$.
It also involves two target networks $\mu'(s; \theta')$ and $Q'(s,a; \phi')$, which are used for calculating the losses for training the main networks. 
Besides, {\CHENG it} maintains a replay memory, which contains the transitions that are used for training the network. The training process is as follows. 
Consider $N$ experiences sampled uniformly from the replay memory, i.e., $(s_i, a_i, r_i, s_{i+1})$ for $i=1, 2, \cdots, N$. 
{\KXREVIEW{For the critic network $Q(s,a;\phi)$, {\CHENG it} computes the loss using the Bellman equation \cite{mnih2015human}, i.e., }}
\begin{equation}
    L(\phi) = \frac{1}{N} \sum_{i=1}^N (y_i - Q(s_i, a_i; \phi))^2
\end{equation}
\begin{equation}
    y_i = r_i + \gamma \cdot Q'(s_{i+1}, \mu'(s_{i+1}; \theta'); \phi')
\end{equation}
where $\gamma$ is a discount factor.
{\KXREVIEW{For the actor network $\mu(s;\theta)$, 
{\CHENG it computes a loss as the negation of the expected return,}
i.e.,
\begin{equation}
    L(\theta) = -\frac{1}{N} \sum_{i=1}^N Q(s_i, \mu(s_i; \theta); \phi)
\end{equation}
}}
Finally, it updates the parameters $\theta$ and $\phi$ by gradient descent.


\section{Experiments}
\label{sec:exp}

\begin{table}[t]
\scriptsize
    \centering
    \caption{Dataset statistics.}
    \vspace{-2mm}
    \label{tab:data}
    \begin{tabular}{c|cc|cc}
    \toprule
          Category & {\KXREVIEW{Graph (Train)}} & {\KXREVIEW{$|E|$}} & Graph (Test)  & $|E|$ \\
    \midrule
        Citation & {\KXREVIEW{cit-HepTH (HE)}} & {\KXREVIEW{2.67M}} & cit-patent (PT) & 16.5M\\
        Community & {\KXREVIEW{com-DBLP (DB)}} & {\KXREVIEW{1.04M}} & com-youtube (YT) & 2.99M \\
        Social & {\KXREVIEW{soc-Texas84 (TX)}} & {\KXREVIEW{1.59M}} & {\KXREVIEW{soc-twitter (TW)}} & {\KXREVIEW{265M}} \\
        Web & {\KXREVIEW{web-Stanford (SF)}} & {\KXREVIEW{2.31M}} & web-google (GL) & 5.10M \\
    \bottomrule
    \end{tabular}
    \vspace{-4mm}
\end{table}


\subsection{Experimental Setup}
\label{subsec:setp-up}

\noindent
\textbf{Datasets}.
We use both real datasets and synthetic datasets in our experiments.
We use four types of real graphs, including citation graphs, web graphs, community networks and online social networks, which are collected from an open-source network repository \cite{rossi2015network, rossi2016interactive}. 
For each graph in a dataset, we ignore the directions, weights and self-loops (if any).
There are two ways to generate deletions for constructing fully dynamic graph streams, namely (1) massive deletion \cite{stefani2017triest} and (2) light deletion \cite{lee2020temporal}, respectively. 
We generate fully dynamic graph streams under the massive deletion scenario as follows. All edges are inserted in their natural order but each edge insertion is followed with probability $\alpha$ by a massive deletion event where each edge currently in the graph is deleted with probability $\beta_m$ independently. 
We generate fully dynamic graph streams under the light deletion scenario as follows. 
All edges are inserted in their natural order and each edge has probability $\beta_l$ to be deleted, where the deletion is located at a random position after the corresponding edge insertion. 
%
%
Following the existing study \cite{lee2020temporal}, we generate some synthetic datasets by Forest Fire (FF) \cite{leskovec2007graph} model $G(n, p)$, in which $n$ represents the number of vertices and $p$ controls the density of the graph. Specifically, vertices arrive one at a time and form edges with some subset of the earlier vertices. Since the graph $G$ generated by FF maintains the structural properties (heavy-tailed degree distributions, communities, etc.) and temporal properties (densification, etc.) of the real graphs, we transform $G$ into an edge stream in their natural order. 
To ensure the synthetic datasets have similar scales as the real graphs, we set $n=2$M and $p=0.5$ based on empirical results.
%

\smallskip
\noindent
\textbf{Baselines and Metrics}. 
We consider two instances of the \texttt{WSD} algorithm, which adopt two different weight functions.
The first one uses the learned policy via reinforcement learning, which we denote by \texttt{WSD-L}.
The second one uses a weight function defined with an existing heuristic~\cite{ahmed2017sampling}, which we denote by \texttt{WSD-H}.
Specifically, in \texttt{WSD-H}, $W(e, \mathcal{R}) = 9 \cdot |\mathcal{H}(e)| + 1$ \cite{ahmed2017sampling}, where $|\mathcal{H}(e)|$ is the number of subgraphs completed by edge $e$ and some other edges in $\mathcal{R}$. 
%
We compare our \texttt{WSD} algorithms with three existing algorithms, namely \texttt{Triest} \cite{stefani2017triest}, \texttt{ThinkD} \cite{shin2018think} and \texttt{WRS} \cite{lee2020temporal}. 
We also compare our algorithm with the adapted framework \texttt{GPS-A}.
All algorithms are single-pass algorithms that estimate the triangle counts with a fixed budget for storing sampled edges. 
For the subgraph patterns, we first choose the triangle (3-clique) and the wedge (length-2 paths) by following the existing study~\cite{ahmed2017sampling} since these are the two mostly used subgraph structures.
In addition, we also consider a dense subgraph pattern, i.e., 4-clique, to verify that our algorithm can also work well on some large subgraph structures. 
We compare different algorithms by using the following metrics. 

\begin{itemize}
    \item \textbf{Absolute Relative Error (ARE)}. Let $X$ be the ground truth of the number of subgraphs at the end of the input streams and $\widehat{X}$ be an estimated value of $X$. The absolute relative error is $\frac{|\widehat{X}-X|}{X} \times 100\%$. 
    \item \textbf{Mean Absolute Relative Error (MARE)}. Let $X_t$ be the ground truth of the number of subgraphs by $t$ and $\widehat{X}_t$ be an estimated value of $X_t$. The mean absolute relative error is $\frac{1}{T}\sum_{i=1}^{T} \frac{|\widehat{X}_i-X_i|}{X_i} \times 100\%$. 
\end{itemize}
Due to the random nature of all algorithms, for all experiments, given a setting, we report the mean over 100 times of sampling.

\smallskip\noindent
\textbf{Policy Learning}. There are two neural networks used in \texttt{WSD-L} method. The actor network $\mu(s; \theta)$ involves one input layer and one output layer, and uses ReLU function as the activation function. We add one to the output to avoid assigning zero weights. The critic network $Q(s,a; \phi)$ involves one input layer, one hidden layer and one output layer, where the hidden layer involves 10 neurons and uses ReLU as the activation function. To avoid data scale issues, batch normalization is employed before the activation. 
For each graph in a {\KAIXIN{real}} dataset, we use another real graph under the same type for training, {\CHENGREVIEW as specified in Table~\ref{tab:data}}, since they may have some similar structural and temporal properties. For each graph in the synthetic dataset, we generate a graph $G(n=2\text{M}, p=0.5)$ by FF model for training.
{\KXREVIEW{
Under massive (resp. light) deletion scenario, we generate 10 different edge event streams with the same parameters $\alpha$ and $\beta_m$ (resp. $\beta_l$) and use these generated graphs for training based on empirical findings. 
Basically, using fewer streams would suffer from the over-fitting problem and using more would incur larger training cost without improving the quality of the model much.
}}
%
{\CHENGREVIEW We set the maximum size of the replay buffer as 10,000 and the parameter $N$ as 128.}
We train the networks for 1,000 iterations. In addition, we use the Adam stochastic gradient descent with learning rate of 0.001. For the reward discount $\gamma$, we set it as 0.99. 

\smallskip
\noindent
\textbf{Evaluation Platform}. 
All methods for comparison are implemented in C++. Note that we first implement and train \texttt{WSD-L} using Pytorch (Python 3.6). Then we hardcode the parameters $\theta=\{\bm{W}, \bm{b}\}$ in C++ to improve the efficiency. All experiments are conducted on a machine with Intel Core i9-10940X CPU and a single Nvidia GeForce 2080Ti GPU. Implementation codes and datasets can be found via this link \url{https://github.com/wangkaixin219/WSD/}.

\begin{table}[t]
\scriptsize
\centering
\caption{Results of counting wedges under the massive deletion scenario, where we use the default values of the parameters, $M=200,000$, $\alpha=3,000,000^{-1}$ and $\beta_m=0.8$.}
\vspace{-2mm}
\label{tab:res-wedge-massive}
\begin{tabular}{c|cccccc}
    \toprule
     \textbf{Graph} & \texttt{WSD-L} & \texttt{WSD-H} & \texttt{GPS-A} & \texttt{Triest} & \texttt{ThinkD} & \texttt{WRS} \\ 
    \midrule
    & \multicolumn{6}{c}{Absolute Relative Error ($\%$)} \\ \cmidrule{2-7}
    cit-PT & \textbf{0.046} & \underline{0.051} & 0.058 &  0.077 & 0.071 & 0.062 \\
    com-YT  & \textbf{0.011} & \underline{0.013} & 0.061 & 0.125 & 0.104 & 0.092 \\
    {\KXREVIEW{soc-TW}} & {\KXREVIEW{\textbf{0.243}}} & {\KXREVIEW{\underline{0.411}}} & {\KXREVIEW{0.434}} & {\KXREVIEW{0.627}} & {\KXREVIEW{0.572}} & {\KXREVIEW{0.483}}\\
    web-GL & \textbf{0.041} & \underline{0.044} & 0.117 & 0.815 & 0.670 & 0.366 \\
    synthetic  & \textbf{0.107} & \underline{0.148} & 0.192 & 0.564  & 0.324 & 0.231 \\
    \midrule
    & \multicolumn{6}{c}{Mean Absolute Relative Error ($\%$)} \\ \cmidrule{2-7}
    cit-PT & \textbf{0.036} & \underline{0.039} & 0.072 & 0.104 & 0.067 & 0.054 \\
    com-YT & \textbf{0.007} & \underline{0.008} & 0.057 & 0.061 & 0.054 & 0.044 \\
    {\KXREVIEW{soc-TW}} & {\KXREVIEW{\textbf{0.306}}} & {\KXREVIEW{\underline{0.451}}} & {\KXREVIEW{0.688}} & {\KXREVIEW{1.310}} & {\KXREVIEW{1.003}} & {\KXREVIEW{0.894}} \\
    web-GL & \textbf{0.039} & \underline{0.040} & 0.078 & 0.771 & 0.253 & 0.101 \\
    synthetic & \textbf{0.213} & \underline{0.312} & 0.401 & 0.646 & 0.769 & 0.443\\
    \midrule
    & \multicolumn{6}{c}{Running Time (s)} \\ \cmidrule{2-7}
    cit-PT & \underline{65.9} & \textbf{62.4} & 67.1 & 186.7 & 188.0 & 191.9 \\
    com-YT & 9.6 & 9.3 & 9.9 & \textbf{7.8} & \underline{7.9} & 8.1 \\
    {\KXREVIEW{soc-TW}} & {\KXREVIEW{\underline{1,940.7}}} & {\KXREVIEW{\textbf{1,832.2}}} & {\KXREVIEW{2,981.9}} & {\KXREVIEW{4,392.3}} & {\KXREVIEW{4,417.3}} & {\KXREVIEW{4,592.1}} \\
    web-GL & \underline{13.7} & \textbf{13.2} & 14.3 & 14.1 & 14.6 & 14.8 \\
    synthetic & \underline{43.3} & \textbf{42.9} & 45.2 & 262.7 & 263.2 & 267.9\\
    \bottomrule
\end{tabular}
\vspace{-2mm}
\end{table}

\begin{table}[t]
\scriptsize
\centering
\caption{Results of counting triangles under the massive deletion scenario, where we use the default values of the parameters, $M=200,000$, $\alpha=3,000,000^{-1}$ and $\beta_m=0.8$.}
\vspace{-2mm}
\label{tab:res-triangle-massive}
\begin{tabular}{r|cccccc}
    \toprule
    \textbf{Graph} & \texttt{WSD-L} & \texttt{WSD-H} & \texttt{GPS-A} & \texttt{Triest} & \texttt{ThinkD} & \texttt{WRS} \\ 
    \midrule
    & \multicolumn{6}{c}{Absolute Relative Error ($\%$)} \\ \cmidrule{2-7}
    cit-PT & \textbf{0.075} & \underline{0.083} & 0.106 & 0.175 & 0.143 & 0.142 \\
    com-YT & \textbf{0.048} & \underline{0.053} & 0.073 & 0.188 & 0.109 & 0.067 \\
    {\KXREVIEW{soc-TW}} & {\KXREVIEW{\textbf{0.404}}} & {\KXREVIEW{\underline{0.712}}} & {\KXREVIEW{0.893}} & {\KXREVIEW{1.214}} & {\KXREVIEW{1.056}} & {\KXREVIEW{0.952}}\\
    web-GL & \textbf{0.031} & \underline{0.037} & 0.734 & 0.197 & 0.195 & 0.136 \\
    synthetic & \textbf{2.507} & \underline{3.124} & 3.612 & 4.293 & 3.318 & 3.143 \\
    \midrule
    & \multicolumn{6}{c}{Mean Absolute Relative Error ($\%$)} \\ \cmidrule{2-7}
    cit-PT & \textbf{0.059} & \underline{0.065} & 0.083 & 0.113 & 0.089 & 0.079 \\
    com-YT & \textbf{0.052} & \underline{0.064} & 0.141 & 0.236 & 0.144 & 0.204 \\
    {\KXREVIEW{soc-TW}} & {\KXREVIEW{\textbf{0.513}}} & {\KXREVIEW{\underline{0.748}}} & {\KXREVIEW{0.946}} & {\KXREVIEW{1.433}} & {\KXREVIEW{1.341}} & {\KXREVIEW{1.005}}  \\
    web-GL & \textbf{0.054} & \underline{0.088} & 0.207 & 1.024 & 0.809 & 0.319 \\
    synthetic & \textbf{3.614} & \underline{4.278} & 4.662 & 6.691 & 5.643 & 4.817 \\
    \midrule
    & \multicolumn{6}{c}{Running Time (s)} \\ \cmidrule{2-7}
    cit-PT & \underline{70.4} & \textbf{66.7} & 71.5 & 189.5 & 193.1 & 197.4 \\
    com-YT & 10.2 & 9.9 & 10.6 & \textbf{8.0} & \underline{8.2} & 8.3 \\
    {\KXREVIEW{soc-TW}} & {\KXREVIEW{\underline{2,144.4}}} & {\KXREVIEW{\textbf{2,091.9}}} & {\KXREVIEW{3,329.3}} & {\KXREVIEW{4,810.8}} & {\KXREVIEW{4,839.3}} & {\KXREVIEW{5,023.5}} \\
    web-GL & \underline{16.1} & \textbf{14.9} & 17.3 & 16.9 & 17.5 & 18.4 \\
    synthetic & \underline{49.3} & \textbf{48.1} & 53.7 & 260.2 & 262.1 & 265.0\\
    \bottomrule
\end{tabular}
\vspace{-4mm}
\end{table}

\subsection{Experimental Results}
\label{subsec:exp-res}

\noindent
\textbf{(1) Evaluation on real and synthetic datasets}. 
{\KXREVIEW{Table~\ref{tab:res-wedge-massive} and Table~\ref{tab:res-triangle-massive} show the results of counting triangles and wedges under massive deletion scenario. The result of counting 4-cliques {\CHENGREVIEW and those under the light deletion show similar clues and are thus presented} in Appendix~\ref{app:exp}.}}
%
%
Consider the effectiveness. The results meet our expectations that weighted sampling works better than uniform sampling on all datasets and under two metrics. 
In addition, because \texttt{WSD-L} uses RL to adaptively capture the importance of the edges, it further improves the effectiveness of \texttt{WSD-H} which applies a heuristic based weight function. 
Consider the efficiency. As analyzed in Section~\ref{subsec:wsd}, for all datasets except com-YT, we observe that the number of generated deletions satisfies $|D|>{(\log M)}/{M} \cdot |A|$, where ${(\log M)}/{M}\approx 3\times 10^{-5}$, and thus our proposed algorithms run much faster than existing algorithms. In addition, because \texttt{WSD-L} needs to calculate a state when an insertion event comes, it runs slightly slower than \texttt{WSD-H}. 
Compared among different subgraph patterns, the ARE and MARE become larger as the size of the subgraph pattern increases. The reasons are as follows. First, when the size of the subgraph pattern increases, it becomes more difficult for a new edge to form some subgraph structures with the sampled edges. For example, when an edge event comes, a 4-clique could be formed only if the other 5 edges have been sampled while a wedge would be formed if the other edge has been sampled. Also, the inclusion probability of a large subgraph pattern is usually much smaller than that of a small subgraph pattern, which means a higher variance would be produced when counting larger subgraph structures. 
%

\begin{figure}[t]
    \centering
    {\KXREVIEW{
        \subfigure[ARE (massive deletion scenario)]{
	    \label{fig:scale-are-massive}
		\includegraphics[width=0.23\textwidth]{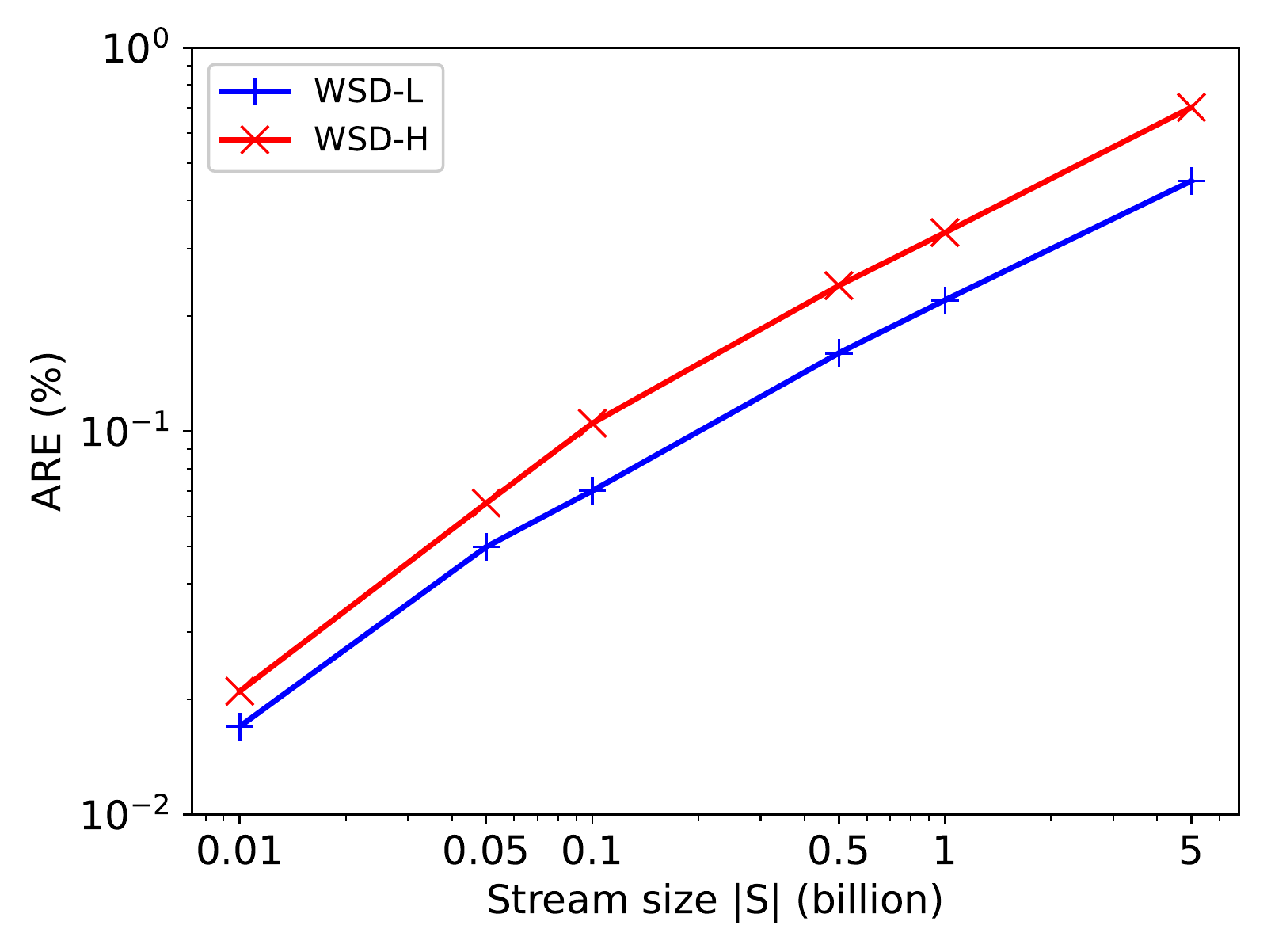}}
	\subfigure[Time (massive deletion scenario)]{
	    \label{fig:scale-time-massive}
		\includegraphics[width=0.23\textwidth]{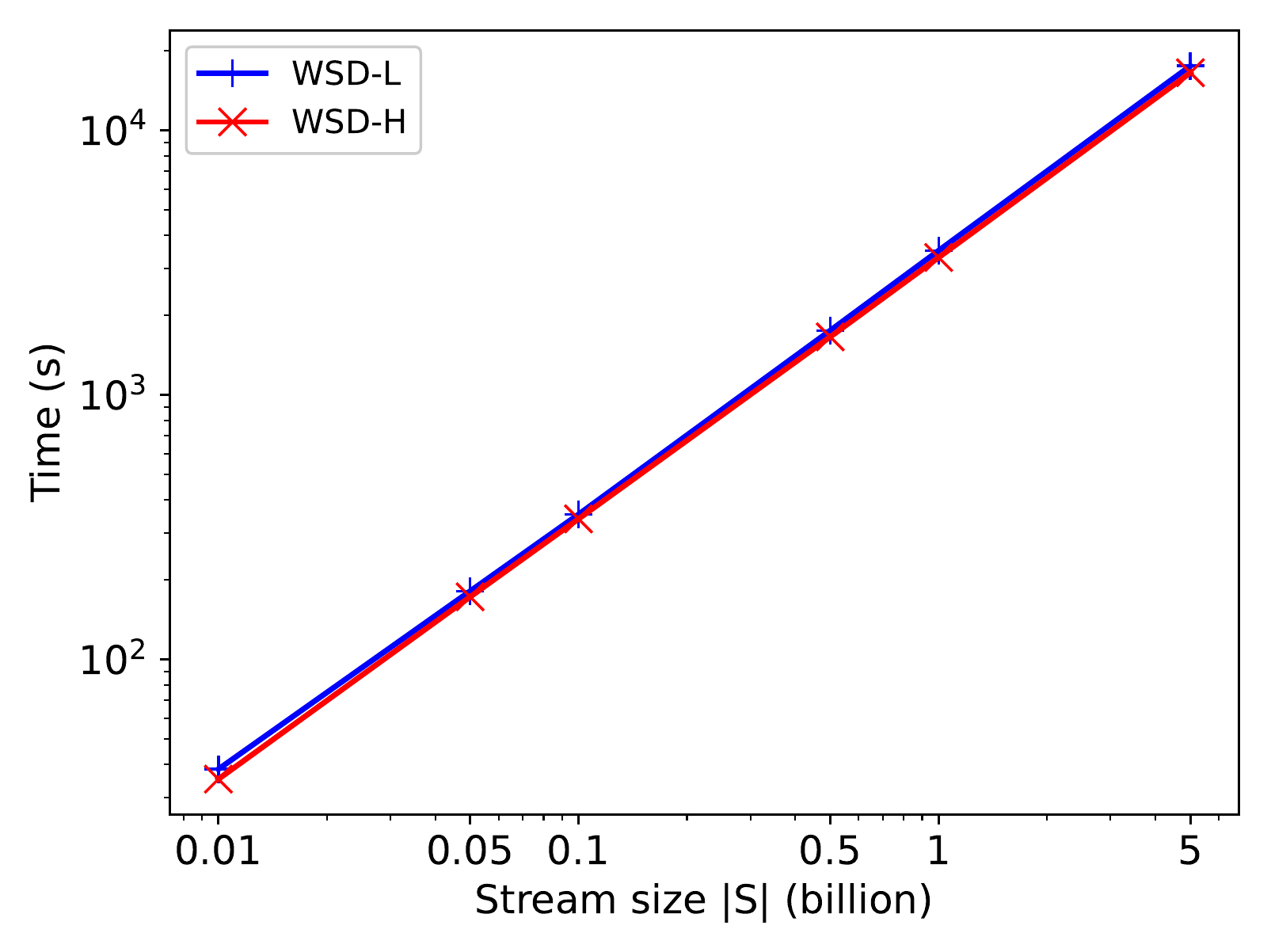}}
    \vspace{-2mm}
    \caption{Results on scalability test of counting triangles, reporting the ARE and running time of \texttt{WSD-L} and \texttt{WSD-H} with different stream sizes (synthetic). }
    \vspace{-4mm}
    \label{fig:scale-massive}
    }}
\end{figure}

{\KXREVIEW{
\smallskip\noindent
\textbf{(2) Scalability test. } 
%
To {\CHENGREVIEW study} the scalability of \texttt{WSD-L} and \texttt{WSD-H}, we generate a series of graphs with different sizes. To be specific, we first use FF model $G(n=1\text{B}, p=0.5)$ to generate the whole graph with slightly more than 5B edges (5.07 billion edges). We then create the edge event stream by adding deletions under massive and light deletion scenarios. Finally, we generate {\CHENGREVIEW graphs with different sizes} by picking the first 10M, 50M, 100M, 500M, 1B and 5B edge events.
We set $M$ as 1M following \cite{ahmed2017sampling}. 
The results under massive deletion scenario are shown in Figure~\ref{fig:scale-massive}. 
Consider the effectiveness. We observe that as the size of the stream increases, the ARE results also increases. The reason is that for all sizes of the streams, we use 1M edges as the sample. Therefore, the estimation results on a stream with more edge events would be less accurate. {\CHENGREVIEW Still}, we can provide a relatively accurate estimation ($<1\%$ error) when we sample only $0.02\%$ edges.  
Consider the efficiency. The running time of \texttt{WSD} is linear to the size of the stream, which is consistent with our theoretical analysis of the time complexity of \texttt{WSD}. Moreover, we also collect the results of the average time of updating the reservoir when an edge event occurs. The average processing time of each edge is around $3.2\mu$s. 

}}

\begin{figure*}[th]
    \centering
    \subfigure[Ordering of the stream (cit-PT)]{
	    \label{fig:massive-are-o}
		\includegraphics[width=0.235\textwidth]{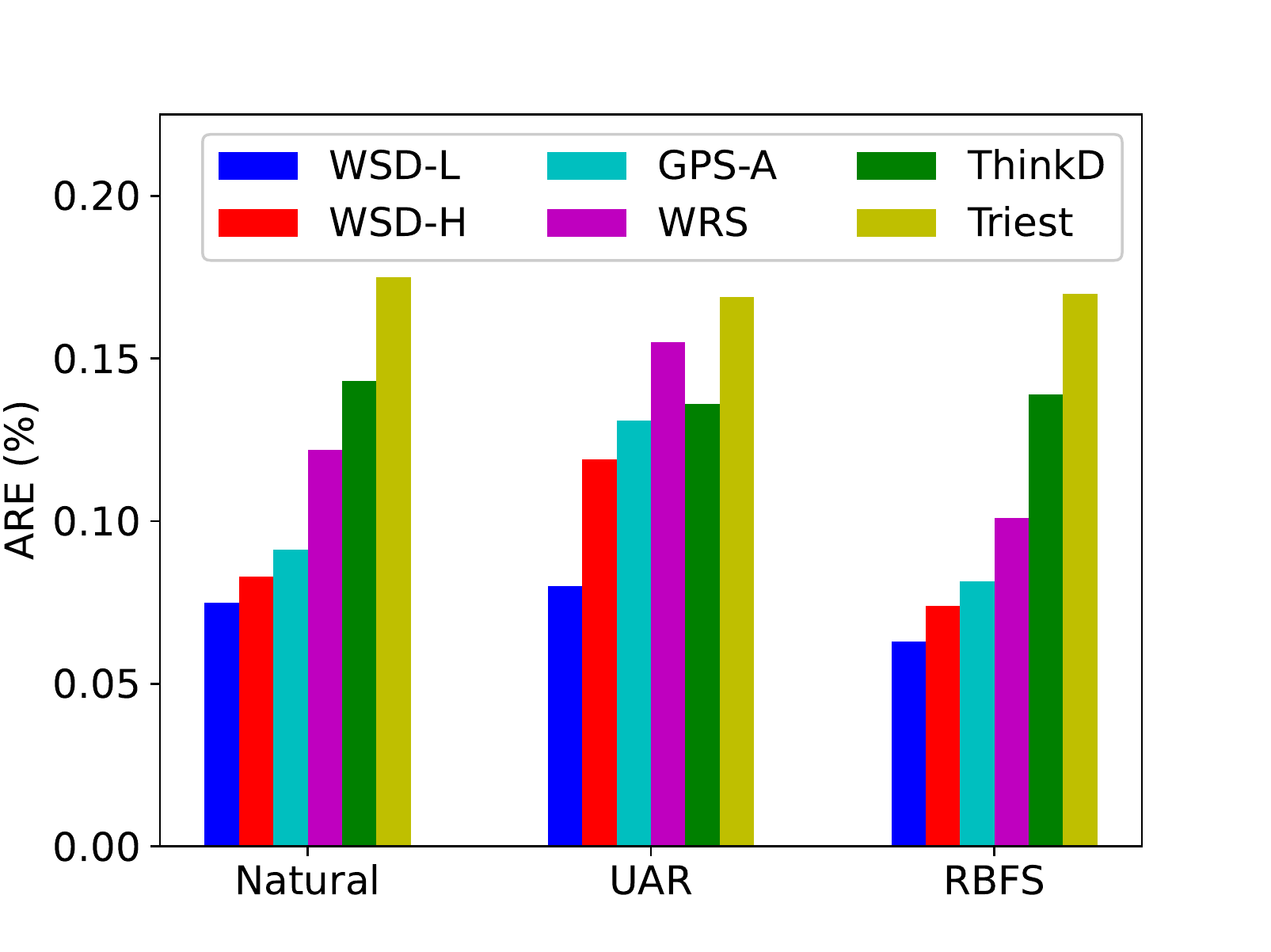}}
	\subfigure[Max. reservoir size $M$ (cit-PT)]{
        \label{fig:massive-are-m}
		\includegraphics[width=0.235\textwidth]{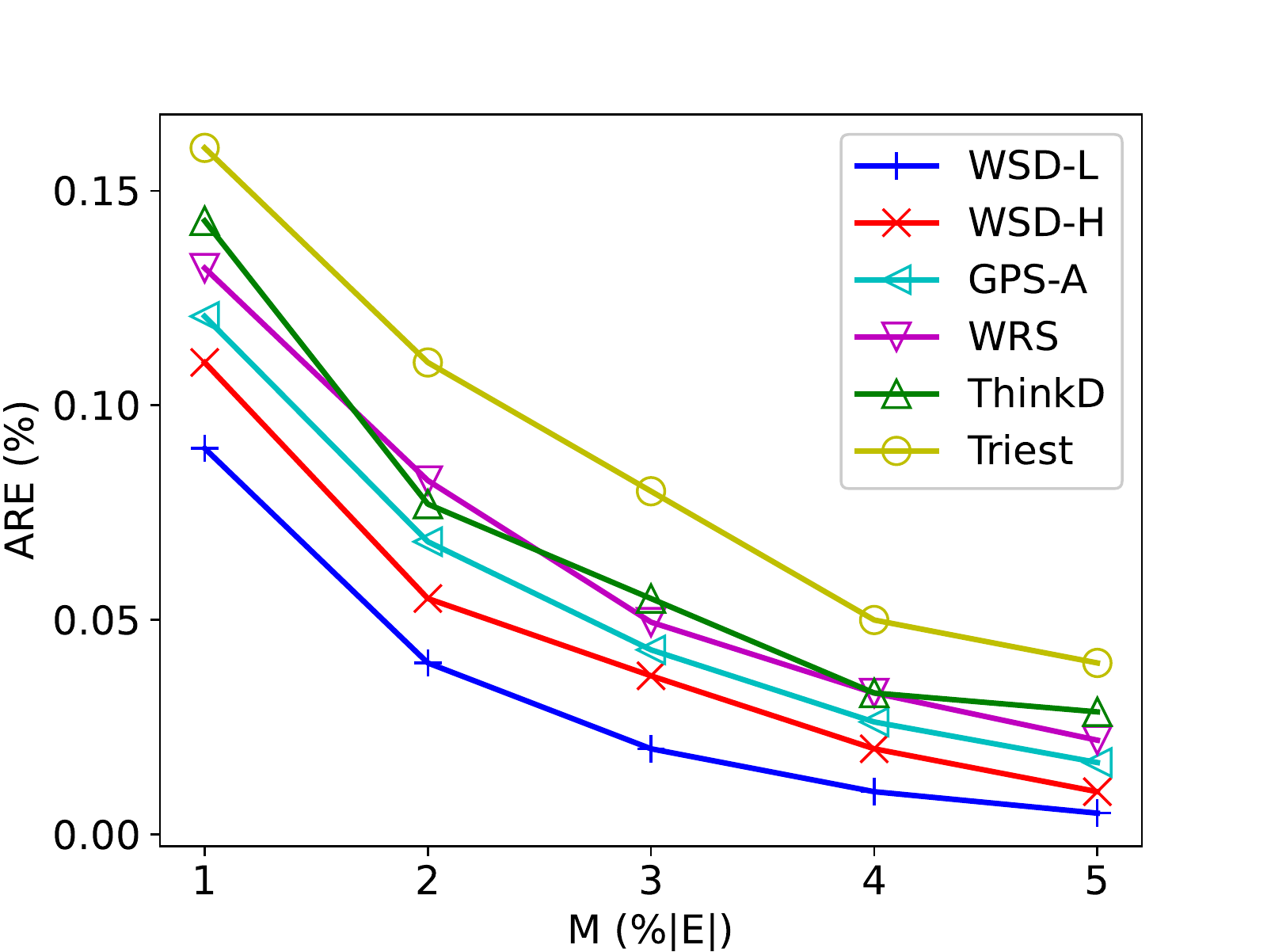}}
    \subfigure[{\KXREVIEW{Training size} (synthetic)}]{
	    \label{fig:train-massive}
		\includegraphics[width=0.22\textwidth]{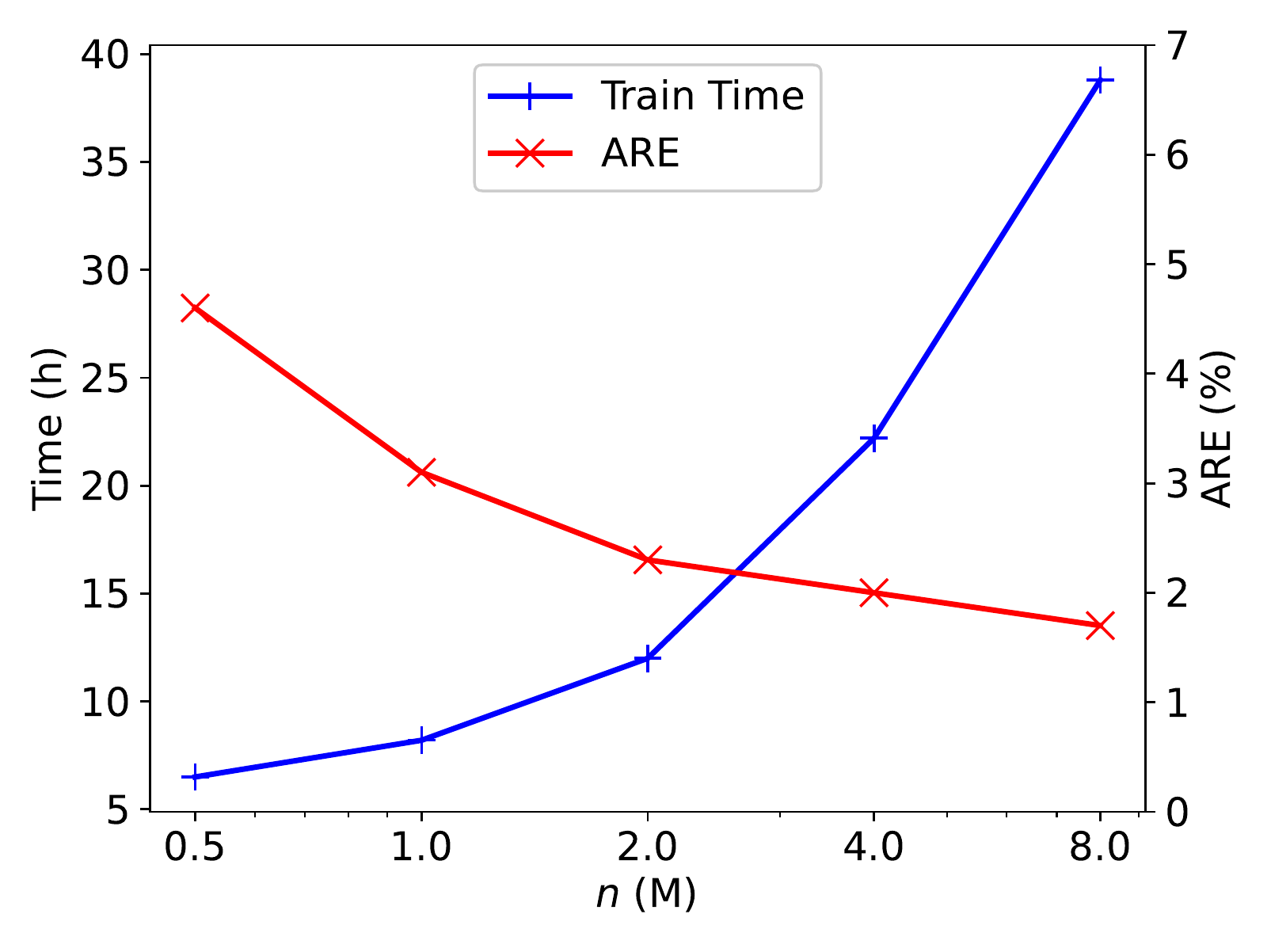}}
	\subfigure[\KXREVIEW{Relationship (cit-PT)}]{
	    \label{fig:relation-massive}
		\includegraphics[width=0.235\textwidth]{sup_figs/relation-massive.pdf}}
    \vspace{-2mm}
    \caption{Results of counting triangles under massive deletion scenario: (a) the ARE under different {\chengfinal orderings} of the stream; (b) the ARE under different {\chengfinal $M$'s}; (c) the training time and ARE under different {\chengfinal sizes} of training graphs; (d) the relationship between the weights and {\chengfinal numbers of triangles}. }
    \vspace{-4mm}
    \label{fig:res-massive}
\end{figure*}

\smallskip
\noindent
\textbf{(3) Impact of the ordering of the stream}. We follow \cite{stefani2017triest} to generate different orderings of the stream. The default setting is the natural ordering of a graph. We also consider the uniform-at-random (UAR) ordering and the random BFS (RBFS) ordering. The UAR ordering is to add edges in the order of a permutation of the natural order while the RBFS ordering is to start from a random vertex and add edges in the order of a BFS exploration of the graph. 
For example, in a social network, when a celebrity registers an account in a new platform, his/her followers would likely create connections (i.e., edge insertions) to him/her in a very short time, as BFS does. Figure~\ref{fig:massive-are-o} shows the ARE results of counting triangles on cit-PT under massive deletion scenario. 
%
We observe that \texttt{WSD-L} algorithm provides estimations with the smallest errors {\CHENG across different settings of ordering}.
These results show the robustness of our \texttt{WSD-L} algorithm which is due to its data-driven nature and adaptability to various underlying dynamics.

\smallskip\noindent
\textbf{(4) Effects of $M$}. We change the maximum reservoir size from $1\%\times |E|$ to $5\%\times |E|$. Figure~\ref{fig:massive-are-m} shows the results of counting triangles on cit-PT under the massive deletion scenario. 
%
The MARE results and the results of counting triangles on other datasets show the similar trends, thus omitted. Under both scenarios, we set the probability parameters same as those used in previous experiments. Our proposed algorithms consistently outperform other existing algorithms. Our algorithm can provide an accurate estimation ($<0.01\%$ error) when we sample only $4\%$ edges.

\begin{table}[h]
\vspace{-4mm}
    \centering
    {\KXREVIEW{
    \caption{Training time (hours) of counting triangles ($\triangle$) and wedges ($\wedge$) on four real datasets under massive deletion scenario. }
    \vspace{-2mm}
    \label{tab:train-massive}
    \begin{tabular}{c|c|c|c|c|c|c|c|c}
    \toprule
    \textbf{Graph} & \multicolumn{2}{c|}{cit-HE} & \multicolumn{2}{c|}{com-DB} & \multicolumn{2}{c|}{soc-TX} & \multicolumn{2}{c}{web-SF} \\
    \midrule
    Pattern $H$ & $\triangle$ & $\wedge$ & $\triangle$ & $\wedge$ & $\triangle$ & $\wedge$ &$\triangle$ & $\wedge$\\ 
    \midrule
    Time (h) & 16.7 & 15.9 & 8.2 & 7.6 & 10.6 & 9.3 & 13.5 & 12.1\\
    \bottomrule
    \end{tabular}
    }}
    \vspace{-4mm}
\end{table}

{\KXREVIEW{
\smallskip\noindent
\textbf{(5) Training. } 
%
We first report the training time of counting triangles and wedges on four datasets under massive deletion scenarios in Table~\ref{tab:train-massive}. 
We note that it normally takes several hours to train a satisfactory model. 
%
Then we study how the size of training graphs affects the model performance. We generate different graphs $G(n, p=0.5)$ by FF model with $n\in\{0.5\text{M}, 1\text{M}, 2\text{M}, 4\text{M}, 8\text{M}\}$ for training. We then generate a graph $G(n=10\text{M}, p=0.5)$ by FF model, which contains around 50M edges, for testing.
The results of counting triangles under massive deletion scenario are shown in Figure~\ref{fig:train-massive}. We observe that as the the size of graph used for training grows exponentially, the training time also increases exponentially, but the effectiveness improves slightly only. Therefore, we learn the weight function on the graphs with size around 10\%-20\% of those of the testing graphs to balance the effectiveness and the training costs. 
}}



{\KXREVIEW{
\smallskip\noindent
\textbf{(6) Relationship between an edge's weight in \texttt{WSD-L} and its associated subgraph counts.}
%
We reveal the relationship 
by collecting the statistics of (1) each edge's weight and (2) the number of triangles, which contain the edge and are formed by the end of the stream. 
Since an edge's weight depends on those edges that have been sampled in the reservoir when it appears, this value corresponds to a random variable.
Therefore, we run \texttt{WSD-L} 100 times, and for an edge, we report the mean of its weights. 
Figure~\ref{fig:relation-massive} shows the results of counting triangles on cit-PT under massive deletion scenario. 
We observe the trend that the larger an edge's weight is, the more triangles the edge is involved in, which is well aligned with the intuition behind Eq.~(\ref{eq:sv}).
}}

\begin{table}[th]
\vspace{-4mm}
\scriptsize
    \centering
    {\KXREVIEW{
    \caption{Results on transferability of \texttt{WSD-L} under massive scenario. }
    \label{tab:transfer-massive}
    \vspace{-2mm}
    \begin{tabular}{c|ccccc|c}
    \toprule
        (\textbf{Training}) & cit-HE & com-DB & soc-TX & web-SF & synthetic & \texttt{WSD-H}\\
    \midrule
        cit-PT  & \textbf{0.076} & 0.080 & \underline{0.077} & 0.078 & 0.081 & 0.083\\
        com-YT & \underline{0.049} & \textbf{0.048} & 0.053 & 0.052 & 0.050 & 0.053\\
        soc-TW & 0.653 & 0.567 & \textbf{0.451} & \underline{0.510} & 0.687 & 0.711\\
        web-GL & \underline{0.033} & 0.036 & 0.035 & \textbf{0.032} & 0.034 & 0.037\\
    \bottomrule
    \end{tabular}
    }}
    \vspace{-4mm}
\end{table}

{\KXREVIEW{
\smallskip\noindent
\textbf{(7) Transferability of \texttt{WSD-L}.} 
%
We test the transferability of \texttt{WSD-L} by applying the policy (i.e., the weight function), which has been trained on a graph under a certain category, to a graph under another category. To ensure fairness of comparison, we use the first 1M edges of the training graph in each dataset to train the model. Table~\ref{tab:transfer-massive} shows the ARE results of counting triangles. 
{\CHENG As expected, the model works the best on a graph under the same category as the training dataset and has its performance degraded to some extent when being used on a graph under a different category. In addition, the model, when used on a graph under a different category from the one of the training dataset, still works better than the heuristic-based method, which verifies the transferability of our method.}
}}

\smallskip
\noindent
\textbf{(8) Evaluation on insertion-only scenario.}
We evaluate a special case of the fully dynamic graph streams, in which it only consists of edge insertion events. Under the insertion-only scenario, \texttt{WSD-H} and \texttt{GPS-A} would be equivalent to \texttt{GPS}. 
{\CHENG{We also compare \texttt{Triest}, \texttt{ThinkD} and \texttt{WRS} under this setting.}}
The results are presented in Table~\ref{tab:res-insertion-only}. Consider the effectiveness. Two weight-sensitive sampling schemes (\texttt{WSD-L} and \texttt{GPS}) outperform other algorithms. Due to the data-driven fashion of \texttt{WSD-L}, it further improves the performance. Consider the efficiency. For \texttt{WSD-L} and \texttt{GPS}, it would cost $O(\log M)$ (worst case) to maintain the min-priority queue when an edge is inserted while for the other three algorithms, it would cost $O(1)$ to update the reservoir. Therefore, \texttt{WSD-L} and \texttt{GPS} run slightly slower.

\begin{table}[t]
\centering
\caption{Experimental results of counting triangles on cit-PT under insertion-only scenario.}
\label{tab:res-insertion-only}
\begin{tabular}{c|cccccc}
    \toprule
     & \texttt{WSD-L} & \texttt{GPS} & \texttt{Triest} & \texttt{ThinkD}  &  \texttt{WRS}  \\
    \midrule
    ARE (\%)    & \textbf{0.30} & \underline{0.34} & 0.85 & 0.41 & 0.36 \\
    MARE (\%) & \textbf{0.14} & \underline{0.20} & 0.66 & 0.24 & 0.22 \\
    Time (s) & 49.6 & 48.5 & \textbf{39.3} & \underline{40.2} & 41.1 \\
    \bottomrule
\end{tabular}
\vspace{-4mm}
\end{table}


\section{Related Work}
\label{sec:related}

\noindent
\textbf{Subgraph Counting}. 
Subgraph counting (e.g., triangle counting) in a streaming graph is a fundamental problem in graph analysis and has been extensively studied.
For example, many studies \cite{stefani2017triest, wang2017approximately, ahmed2017sampling, jung2019furl, shin2018think, shin2020fast, zhang2020reservoir, lee2020temporal, tsourakakis2009doulion, lim2015mascot, wang2019rept, etemadi2019pes, lim2018memory, yu2019distributed, han2017counting} propose algorithms to estimate triangle counts in a streaming graph. Among these studies, some studies~\cite{stefani2017triest, wang2017approximately, ahmed2017sampling, jung2019furl, shin2018think, shin2020fast, zhang2020reservoir,lee2020temporal} assume some space constraint for sampled edges while others \cite{tsourakakis2009doulion, lim2015mascot, wang2019rept, etemadi2019pes, lim2018memory, yu2019distributed, han2017counting} assume no constraint on the number of sampled edges.
Our work follows the settings in \cite{stefani2017triest, lee2020temporal, shin2020fast, shin2017wrs, shin2018think}, which target fully dynamic graph steams and assume a constraint on the number of sampled edges.
{\KXREVIEW{
These existing methods all adopt the \emph{random pairing} technique \cite{gemulla2006dip}, which extends the standard reservoir sampling method to support deletions, where each deletion from the stream is {\CHENGREVIEW ``paired with'' (or compensated by)} a subsequent insertion. Specifically, \texttt{Triest} \cite{stefani2017triest} is the first method to estimate the triangle count in fully graph streams with random pairing, where the estimation is only updated when an edge is sampled. 
Later, \cite{shin2020fast, shin2018think} extend \cite{stefani2017triest} and propose \texttt{ThinkD}. Instead of simply discarding the unsampled edges as \texttt{Triest} {\CHENGREVIEW does}, \texttt{ThinkD} first updates the estimation and then updates the sampled graph, which results in a smaller variance. 
More recently, \texttt{WRS} \cite{lee2020temporal, shin2017wrs} is proposed to further exploit the temporal locality information. It divides the storage budget into two parts, namely waiting room and reservoir, and stores the most recent edges in the waiting room unconditionally and the sampled edges in the reservoir.
}}
As explained in Section~\ref{sec:intro}, these existing studies~\cite{stefani2017triest, shin2020fast, lee2020temporal, shin2018think, shin2017wrs} all sample the edges with uniform probabilities, e.g., each edge is treated equally for sampling, which would likely result in sub-optimal samples.
%
%
In addition, there are some other studies on the problem of counting subgraphs in a static graph \cite{alon1997finding, arifuzzaman2017distributed, arifuzzaman2019fast, ouyang2019memory,dave2017clog, wang2014efficiently,han2016waddling, yang2018ssrw, chen2016general, saha2015finding, huangtowards, ahmed2016estimation, bressan2017counting, bressan2018motif, bressan2019motivo, klusowski2018counting}, which is different from the setting targeted in our work.
{\KXREVIEW{
It is worth noting that there are several recent studies, which develop  learning-based algorithms for subgraph counting problem
\cite{chen2020can, liu2020neural, zhao2021learned, wang2022neural, roy2022interpretable}. They differ from our study in (1) all these studies target static graphs; (2) \cite{zhao2021learned, wang2022neural, liu2020neural} target the problem of counting subgraph isomorphism matches, where each vertex has a label.}}
%
Some other studies~\cite{wu2017counting,wu2018counting} target the subgraph counting problem on online social networks. Interested readers are referred to two surveys \cite{al2018triangle, ribeiro2021survey} for more details of subgraph counting algorithms.

\smallskip
\noindent
\textbf{Reservoir Sampling}. Reservoir sampling was first studied in \cite{vitter1985random}, which samples items in a stream with equal probabilities. \cite{nath2004synopsis} implements the simple reservoir sampling via min-wise sampling, which provides the same effect as \cite{vitter1985random} but consumes more memory. To handle fully dynamic streams with deletions, \cite{gemulla2006dip} proposes a framework, namely random pairing, where each deletion from the stream is mapped to a subsequent insertion. Random pairing guarantees uniformity, and thus it is widely used for sampling fully dynamic streams in which items have the same weight. 
To sample weighted items, \cite{duffield2007priority, efraimidis2006weighted} both propose a rank-based sampling scheme. They differ in their priority functions and are used in different applications. However, these two methods can only handle insertion-only streams while our weighted sampling method can handle fully dynamic streams.

\smallskip
\noindent
\textbf{Reinforcement Learning}. Given a specific environment, which is generally formulated as a Markov Decision Process (MDP) \cite{puterman2014markov}, reinforcement learning helps an agent in the environment learn how to map the situations to actions so as to maximize the accumulative rewards \cite{sutton2018reinforcement}. In this paper, we model 
the process of deciding the weights of edges in a stream sequentially as an MDP and use a popular policy gradient method DDPG \cite{lillicrap2016continuous} for solving the problem. To the best of knowledge, this is the first deep reinforcement learning based solution applied to the subgraph counting in graph streams.


\section{Conclusion}
\label{sec:conclusion}

In this paper, we study the subgraph counting problem in fully dynamic graph streams. 
We propose a weighted sampling algorithm \texttt{WSD}, which {\chengfinal samples} edges non-uniformly based on their weights, and construct an unbiased estimator based on the sampled edges by \texttt{WSD}.
Furthermore, we develop a reinforcement learning based method for setting weights of edges in a data-driven fashion.
Compared with existing algorithms, our algorithms can produce {\chengfinal estimations} with smaller errors and often run faster. 
One interesting direction for future research is to 
extend the reinforcement learning enhanced \texttt{WSD} method to other problems on fully dynamic graphs.

\smallskip
\noindent\textbf{Acknowledgments.}
This research is supported by the Ministry of Education, Singapore, under its Academic Research Fund (Tier 2 Award MOE-T2EP20221-0013 and Tier 1 Award (RG77/21)). Any opinions, findings and conclusions or recommendations expressed in this material are those of the author(s) and do not reflect the views of the Ministry of Education, Singapore.
This work was also partially supported by the A*STAR Cyber-Physical Production
System (CPPS) – Towards Contextual and Intelligent Response Research Program, under the RIE2020 IAF-PP Grant A19C1a0018, and Model Factory@SIMTech.

\bibliographystyle{IEEEtran}
\bibliography{reference}

\clearpage


\appendix

\subsection{Proof of Theorem~\ref{theo:gps-a}}
\label{app:proof-gps-a}

\begin{proof}
Since the sampling process of \texttt{GPS-A} is exactly the same as that of \texttt{GPS} and we simply attach a ``DEL'' tag to the corresponding edge when a deletion event happens, the probability that a set of edges is included in the reservoir would be {\CHENG equal} to that of \texttt{GPS}, i.e., Eq.~(\ref{eq:gps-edges}). 
%
{\CHENG We then prove the following two statements.}
\begin{equation}
\label{eq:gps-a-xy}
    \mathbb{E}[X_{\texttt{GPS-A}}^{J}] = 1, \quad \mathbb{E}[Y_{\texttt{GPS-A}}^{J}] = 1. 
\end{equation}
Note that $X_{\texttt{GPS-A}}^J$ is positive only when all edges $e\in J\setminus e_{i_{|H|}}$ are in the reservoir $\mathcal{R}$ at the end of $t_a(J)-1$. Then at the beginning of $t_a(J)$, by applying Eq.~(\ref{eq:gps-edges}), we have
\begin{equation}
    \mathbb{P}[(J \setminus e_{i_{|H|}}) \subset \mathcal{R} ]
    = \prod_{e \in J\setminus e_{i_{|H|}}} \mathbb{P}[r(e)>r_{M+1}].
\end{equation}
Thus, $\mathbb{E}[X_{\texttt{GPS-A}}^{J}] = 1$ holds. The other statement can be verified similarly. Finally, based on the linearity of expectation and Eq.~(\ref{eq:gps-a-xy}), we have the following deductions.
\begin{equation}
\begin{aligned}
    & \mathbb{E}[c_{\texttt{GPS-A}}^{(t)}] = \sum_{J\in \mathcal{A}^{(t)}} \mathbb{E}[X_{\texttt{GPS-A}}^{J}] - \sum_{J\in \mathcal{D}^{(t)}} \mathbb{E}[Y_{\texttt{GPS-A}}^{J}] \\
    &= \sum_{J \in \mathcal{A}^{(t)}} 1 - \sum_{J \in \mathcal{D}^{(t)}} 1 
    = |\mathcal{A}^{(t)}| - |\mathcal{D}^{(t)}| =  |\mathcal{J}^{(t)}|, 
\end{aligned}
\end{equation}
which completes the proof. 
\end{proof}




\subsection{Proof of Theorem~\ref{theo:wsd}}
\label{app:proof-wsd}

\begin{proof}

We first prove by deduction that at the end of time $t$, the probability that a set of edges $E=\{e_1, \cdots, e_{|E|}\}$ $(|E| \le M)$ is included in the reservoir $\mathcal{R}^{(t)}$ is as follows.
\begin{equation}
\label{eq:prob-set}
    \mathbb{P}[E\subset \mathcal{R}^{(t)}] = \displaystyle\prod\limits_{e\in E} \mathbb{P}[e\in \mathcal{R}^{(t)}] = \displaystyle\prod\limits_{e\in E} \mathbb{P}[r(e)>\tau_q], 
\end{equation}
where $\tau_q$ is observed at time $t$. 
%
When $t \le M$, since (1) the set of the edges would be in the reservoir for sure, and (2) $\tau_q$ is always 0, we know that Eq.~(\ref{eq:prob-set}) holds. 
Assume that Eq.~(\ref{eq:prob-set}) holds for $t = k$ $(k \ge M)$.
Consider $t = k + 1$. We have three cases (as defined before).
\begin{itemize}
    \item \textbf{Case 1 and Case 3.}
    Eq.~(\ref{eq:prob-set}) holds since the probabilities that edges in $E$ are included are not changed. 
    
    \item \textbf{Case 2.} 
    $E$ would be in the reservoir $\mathcal{R}^{(k+1)}$ only if all edges in $E$ are not dropped after the insertion, the probability of which is 
    \begin{equation}
    \label{eq:prob-set-cond}
        \displaystyle\prod\limits_{e\in E} \mathbb{P}[e\in \mathcal{R}^{(k+1)} \mid e\in \mathcal{R}^{(k)}]
        = \displaystyle\prod\limits_{e\in E} \frac{\mathbb{P}[e\in \mathcal{R}^{(k+1)}]}{\mathbb{P}[e\in \mathcal{R}^{(k)}]}
    \end{equation}
    Then, we deduce that the probability that $E$ is in the reservoir $\mathcal{R}^{(k+1)}$ is as follows.
    \begin{equation}
    \begin{aligned}
        & \mathbb{P}[E\subset \mathcal{R}^{(k+1)}] \\ 
        &= \mathbb{P}[E\subset \mathcal{R}^{(k)}] \cdot \mathbb{P}[E\subset \mathcal{R}^{(k+1)} \mid E \subset \mathcal{R}^{(k)}] \\
        &= \displaystyle\prod\limits_{e\in E} \mathbb{P}[e\in \mathcal{R}^{(k)}] \cdot \displaystyle\prod\limits_{e\in E} \frac{\mathbb{P}[e\in \mathcal{R}^{(k+1)}]}{\mathbb{P}[e\in \mathcal{R}^{(k)}]} \\
        &= \displaystyle\prod\limits_{e\in E} \mathbb{P}[e\in \mathcal{R}^{(k+1)}] = \displaystyle\prod\limits_{e\in E} \mathbb{P}[r(e)>\tau_q^{(k+1)}]
    \end{aligned}
    \end{equation}
    
\end{itemize}

{\KAIXIN
So far we have assumed that $e_{k+1} \notin E$.  
Next, we consider the case that $e_{k+1}$ is in $E$. In this case, we define $E'=E\setminus e_{k+1}$. 
we have the following deductions.
\begin{equation}
\label{eq:17}
\begin{aligned}
    &\mathbb{P}[E\subset \mathcal{R}^{(k+1)}] = \mathbb{P}[(E'\cup e_{k+1})\subset \mathcal{R}^{(k+1)}] \\
    &= \mathbb{P}[(E' \subset \mathcal{R}^{(k+1)} \land e_{k+1} \in \mathcal{R}^{(k+1)}] \\
    &= \mathbb{P}[e_{k+1} \in \mathcal{R}^{(k+1)}] \cdot \mathbb{P}[(E' \subset \mathcal{R}^{(k+1)} \mid e_{k+1} \in \mathcal{R}^{(k+1)}] \\
    &=\mathbb{P}[e_{k+1}\in \mathcal{R}^{(k+1)}] \cdot \mathbb{P}[E'\subset \mathcal{R}^{(k)}] \\
    & \quad\quad\quad\quad  \cdot \mathbb{P}[E'\subset \mathcal{R}^{(k+1)} \mid E' \subset \mathcal{R}^{(k)}, e_{k+1} \in \mathcal{R}^{(k+1)}] \\
    &= \mathbb{P}[r(e_{k+1})> \tau_q^{(k+1)}] \cdot \displaystyle\prod\limits_{e\in E'} \mathbb{P}[r(e)>\tau_q^{(k)}] \\
    & \quad\quad\quad\quad  \cdot \mathbb{P}[E'\subset \mathcal{R}^{(k+1)} \mid E' \subset \mathcal{R}^{(k)}, e_{k+1} \in \mathcal{R}^{(k+1)}] \\
\end{aligned}
\end{equation}

The first equation is derived from the definition of $E$. The third and forth equations are based on conditional probability. The fifth equation applies Lemma~\ref{lemma:prob} and the assumption that Eq.~(\ref{eq:prob-set}) holds when $t=k$ to the forth equation. 
We consider the conditional probability $\mathbb{P}[E'\subset \mathcal{R}^{(k+1)} \mid E' \subset \mathcal{R}^{(k)}, e_{k+1} \in \mathcal{R}^{(k+1)}]$ in the following cases.

For Case 1, we have $\tau_q^{(k+1)} = \tau_q^{(k)}$. Since the reservoir is not full, the conditional probability would be equal to 1. Thus, Eq.~(\ref{eq:prob-set}) holds when $t =k+1$. 

For Case 2, 
since the reservoir is full before the insertion of $e_{k+1}$, then $e_{k+1}\in \mathcal{R}^{(k+1)}$ means that the edge with the minimum rank in $\mathcal{R}^{(k)}$ is dropped, and $\tau_q^{(k+1)}$ is updated. Thus, the conditional probability is equal to 
\begin{equation}
\begin{aligned}
\label{eq:18}
    & \mathbb{P}[E'\subset \mathcal{R}^{(k+1)} \mid E' \subset \mathcal{R}^{(k)}, e_{k+1} \in \mathcal{R}^{(k+1)}] \\
    &= \displaystyle\prod\limits_{e\in E'} \frac{\mathbb{P}[e\in \mathcal{R}^{(k+1)}]}{\mathbb{P}[e\in \mathcal{R}^{(k)}]} = \displaystyle\prod\limits_{e\in E'} \frac{\mathbb{P}[r(e) > \tau_q^{(k+1)}]}{\mathbb{P}[r(e) > \tau_q^{(k)}]} \\
\end{aligned}
\end{equation}
By applying Eq.~(\ref{eq:18}) to Eq.~(\ref{eq:17}), we know Eq.~(\ref{eq:prob-set}) holds when $t=k+1$. Therefore, Eq.~(\ref{eq:prob-set}) holds $\forall t$. 
}

\smallskip
Next, we prove the following two statements. 
\begin{equation}
\label{eq:x&y}
    \mathbb{E}[X_{\texttt{WSD}}^{J}] = 1, \quad \mathbb{E}[Y_{\texttt{WSD}}^{J}] = 1. 
\end{equation}
Note that $X_{\texttt{WSD}}^J$ is positive only when all edges $e\in J\setminus e_{i_{|H|}}$ are in the reservoir $\mathcal{R}$ at the end of $t_a(J)-1$. Then at the beginning of $t_a(J)$, 
by applying Eq.~(\ref{eq:prob-set}), we have
\begin{equation}
    \mathbb{P}[(J \setminus e_{i_{|H|}}) \subset \mathcal{R} ]
    = \prod_{e \in J\setminus e_{i_{|H|}}} \mathbb{P}[r(e)>\tau_q].
\end{equation}
Thus, $\mathbb{E}[X_{\texttt{WSD}}^{J}] = 1$ holds. The other statement can be verified similarly.

Finally, based on the linearity of expectation and Eq.~(\ref{eq:x&y}), we have the following deductions.
\begin{equation}
\begin{aligned}
    & \mathbb{E}[c_{\texttt{WSD}}^{(t)}] = \sum_{J\in \mathcal{A}^{(t)}} \mathbb{E}[X_{\texttt{WSD}}^{J}] - \sum_{J\in \mathcal{D}^{(t)}} \mathbb{E}[Y_{\texttt{WSD}}^{J}] \\
    &= \sum_{J \in \mathcal{A}^{(t)}} 1 - \sum_{J \in \mathcal{D}^{(t)}} 1 
    = |\mathcal{A}^{(t)}| - |\mathcal{D}^{(t)}| =  |\mathcal{J}^{(t)}|, 
\end{aligned}
\end{equation}
which completes the proof. 
\end{proof}

\subsection{Additional Experimental Results}
\label{app:exp}

\begin{table}[t]
\scriptsize
\centering
\caption{Results of counting 4-cliques under the massive deletion scenario, where we use the default values of the parameters, $M=200,000$, $\alpha=3,000,000^{-1}$ and $\beta_m=0.8$.}
\vspace{-2mm}
\label{tab:res-4clique-massive}
\begin{tabular}{r|cccccc}
    \toprule
     \textbf{Graph} & \texttt{WSD-L} & \texttt{WSD-H} & \texttt{GPS-A} & \texttt{Triest} & \texttt{ThinkD} & \texttt{WRS} \\ 
    \midrule
    & \multicolumn{6}{c}{Absolute Relative Error ($\%$)} \\ \cmidrule{2-7}
    cit-PT & \textbf{0.771} &	\underline{0.880} & 0.962 &	1.365 &	1.114 &	0.947\\
    com-YT  &  \textbf{0.481} &	\underline{0.551} & 0.684 &   1.330 &	1.046 &	0.822\\
    web-GL &  \textbf{0.582} &	\underline{0.666} & 0.747 &	1.229 &	1.099 &	0.847\\
    synthetic &  \textbf{2.843} & \underline{3.207} & 3.582 & 3.913 & 3.764 & 3.368 \\
    \midrule
    & \multicolumn{6}{c}{Mean Absolute Relative Error ($\%$)} \\ \cmidrule{2-7}
    cit-PT  & \textbf{0.811} &	\underline{0.915} & 0.941 &	1.361 &	1.040 &	0.922\\
    com-YT  & \textbf{0.833} &	\underline{0.894} & 0.976 &	1.858 &	1.407 &	1.027\\
    web-GL & \textbf{0.391} &	\underline{0.466} & 0.644 &	1.093 &	0.893 &	0.729\\
    synthetic  & \textbf{3.064} & \underline{3.291} & 3.563 & 4.470 & 4.115 & 3.764 \\
    \midrule
    & \multicolumn{6}{c}{Running Time (s)} \\ \cmidrule{2-7}
    cit-PT & \underline{73.5} &	\textbf{69.1} & 76.3 &	270.8	& 273.5 &	277.4\\
    com-YT & 16.8 &	16.3 & 19.2 &	\textbf{14.8} &	\underline{15.1} &	15.7  \\
    web-GL & \underline{25.9} &	\textbf{21.1}  & 30.4 &	65.1 &	67.7 &	70.6\\
    synthetic & \underline{64.3} & \textbf{61.5} & 67.4 & 328.4 & 339.5 & 351.5 \\
    \bottomrule
\end{tabular}
\vspace{-2mm}
\end{table}

\begin{table}[th]
\scriptsize
\centering
\caption{Results of counting wedges under the light deletion scenario, where we use the default values of the parameters, $M=200,000$ and $\beta_l=0.2$.}
\vspace{-2mm}
\label{tab:res-wedge-light}
\begin{tabular}{c|cccccc}
    \toprule
    \textbf{Graph} & \texttt{WSD-L} & \texttt{WSD-H} &\texttt{GPS-A} & \texttt{Triest} & \texttt{ThinkD}  &  \texttt{WRS}  \\ 
    \midrule
    & \multicolumn{6}{c}{Absolute Relative Error ($\%$)} \\ \cmidrule{2-7}
    cit-PT & \textbf{0.009} & \underline{0.010} & 0.025 & 0.062 & 0.053 & 0.035 \\
    com-YT  & \textbf{0.006} & \underline{0.008} & 0.058 & 0.289 & 0.277 & 0.158 \\
    {\KXREVIEW{soc-TW}} & {\KXREVIEW{\textbf{0.343}}} & {\KXREVIEW{\underline{0.421}}} & {\KXREVIEW{0.509}} & {\KXREVIEW{0.657}} & {\KXREVIEW{0.654}} & {\KXREVIEW{0.603}} \\
    web-GL & \textbf{0.042} & \underline{0.046} &  0.077& 0.429 & 0.347 & 0.128 \\
    synthetic  & \textbf{0.014} & \underline{0.021} & 0.028 & 0.103 & 0.038 & 0.022 \\
    \midrule
    & \multicolumn{6}{c}{Mean Absolute Relative Error ($\%$)} \\ \cmidrule{2-7}
    cit-PT & \textbf{0.007} & \underline{0.008} & 0.024 & 0.057 & 0.046 & 0.033 \\
    com-YT  & \textbf{0.005} & \underline{0.006} & 0.043 & 0.101 & 0.097 & 0.053 \\
    {\KXREVIEW{soc-TW}} & {\KXREVIEW{\textbf{0.391}}} & {\KXREVIEW{\underline{0.482}}} & {\KXREVIEW{0.744}} & {\KXREVIEW{1.139}} & {\KXREVIEW{1.057}} & {\KXREVIEW{0.950}} \\
    web-GL & \textbf{0.014} & \underline{0.026} & 0.053 & 0.344 & 0.211 & 0.075 \\
    synthetic & \textbf{0.017} & \underline{0.030} & 0.042 & 0.111 & 0.054 & 0.034 \\
    \midrule
    & \multicolumn{6}{c}{Running Time (s)} \\ \cmidrule{2-7}
    cit-PT & \underline{57.8} & \textbf{53.6} & 60.7 & 229.4 & 235.4 & 233.3 \\
    com-YT & \underline{14.2} & \textbf{13.1} & 15.9 & 41.6 & 42.5 & 42.7 \\
    {\KXREVIEW{soc-TW}} & {\KXREVIEW{\underline{1025.5}}} & {\KXREVIEW{\textbf{1014.2}}} & {\KXREVIEW{2078.3}} & {\KXREVIEW{3367.3}} & {\KXREVIEW{3389.1}} & {\KXREVIEW{3466.8}} \\
    web-GL & \underline{24.7} & \textbf{22.1} & 25.6 & 79.8 & 81.6 & 82.6 \\
    synthetic & \underline{50.0} & \textbf{49.1} & 53.2 & 133.1 & 133.3 & 135.2 \\
    \bottomrule
\end{tabular}
\vspace{-4mm}
\end{table}

\begin{table}[th]
\scriptsize
\centering
\caption{Results of counting triangles under the light deletion scenario, where we use the default values of the parameters, $M=200,000$ and $\beta_l=0.2$.}
\vspace{-2mm}
\label{tab:res-triangle-light}
\begin{tabular}{c|cccccc}
    \toprule
    \textbf{Graph} & \texttt{WSD-L} & \texttt{WSD-H} & \texttt{GPS-A} & \texttt{Triest} & \texttt{ThinkD}  &  \texttt{WRS}  \\ 
    \midrule
    & \multicolumn{6}{c}{Absolute Relative Error ($\%$)} \\ \cmidrule{2-7}
    cit-PT & \textbf{0.171} & \underline{0.221} & 0.257 & 0.834 & 0.293 & 0.224 \\
    com-YT & \textbf{0.051} & \underline{0.059} & 0.104 & 0.941 & 0.797 & 0.471 \\
    {\KXREVIEW{soc-TW}} & {\KXREVIEW{\textbf{0.564}}} & {\KXREVIEW{\underline{0.762}}} & {\KXREVIEW{1.109}} & {\KXREVIEW{1.484}} & {\KXREVIEW{1.333}} & {\KXREVIEW{1.279}} \\
    web-GL & \textbf{0.061} & \underline{0.069} & 0.153 & 0.591 & 0.270 & 0.301 \\
    synthetic & \textbf{0.049} & \underline{0.067} & 0.114 & 0.652 & 0.441 & 0.233 \\
    \midrule
    & \multicolumn{6}{c}{Mean Absolute Relative Error ($\%$)} \\ \cmidrule{2-7}
    cit-PT & \textbf{0.183} & \underline{0.236} & 0.286 & 1.048 & 0.336 & 0.254 \\
    com-YT & \textbf{0.049} & \underline{0.053} & 0.133 & 0.552 & 0.475 & 0.367 \\
    {\KXREVIEW{soc-TW}} & {\KXREVIEW{\textbf{0.651}}} & {\KXREVIEW{\underline{0.793}}} & {\KXREVIEW{1.164}} & {\KXREVIEW{1.748}} & {\KXREVIEW{1.523}} & {\KXREVIEW{1.404}}\\
    web-GL & \textbf{0.039} & \underline{0.049} & 0.153 & 0.328 & 0.195 & 0.266 \\
    synthetic & \textbf{0.040} & \underline{0.053} &  0.082 & 0.361 & 0.182 & 0.122\\
    \midrule
    & \multicolumn{6}{c}{Running Time (s)} \\ \cmidrule{2-7}
    cit-PT & \underline{61.5} & \textbf{58.0} & 63.7 &  236.1 & 238.9 & 240.4 \\
    com-YT & \underline{17.0} & \textbf{16.1} & 19.1 & 54.7 & 55.5 & 56.2 \\
    {\KXREVIEW{soc-TW}} & {\KXREVIEW{\underline{1314.5}}} & {\KXREVIEW{\textbf{1267.4}}} & {\KXREVIEW{2833.9}} & {\KXREVIEW{4317.8}} & {\KXREVIEW{4320.3}} & {\KXREVIEW{4411.7}} \\
    web-GL & \underline{21.1} & \textbf{19.5} & 23.3 & 68.8 & 70.3 & 72.8 \\
    synthetic & \underline{36.1} & \textbf{35.6} & 38.2 & 134.7 & 137.6 & 139.3\\
    \bottomrule
\end{tabular}
\vspace{-2mm}
\end{table}

\begin{table}[th]
\scriptsize
\centering
\caption{Results of counting 4-cliques under the light deletion scenario, where we use the default values of the parameters, $M=200,000$ and $\beta_l=0.2$.}
\vspace{-2mm}
\label{tab:res-4clique-light}
\begin{tabular}{c|cccccc}
    \toprule
    \textbf{Graph} & \texttt{WSD-L} & \texttt{WSD-H} & \texttt{GPS-A} & \texttt{Triest} & \texttt{ThinkD}  &  \texttt{WRS}  \\ 
    \midrule
    & \multicolumn{6}{c}{Absolute Relative Error ($\%$)} \\ \cmidrule{2-7}
    cit-PT & \textbf{1.156} &	\underline{1.320} & 1.572 &	2.593 &	1.782 &	1.420 \\
    com-YT & \textbf{1.300} &	\underline{1.563} & 1.728 &	2.856 &	2.653 &	2.295 \\
    web-GL & \textbf{0.814} &	\underline{1.198} & 1.302 &	1.966 &	1.538 &	1.439 \\
    synthetic &  \textbf{0.834} & \underline{0.891} & 1.043 &  1.419 & 1.247 & 1.162 \\ 
    \midrule
    & \multicolumn{6}{c}{Mean Absolute Relative Error ($\%$)} \\ \cmidrule{2-7}
    cit-PT & \textbf{1.135} &	\underline{1.372} & 1.521 &	2.177 &	1.768 &	1.475 \\
    com-YT & \textbf{1.166} &	\underline{1.788} & 2.439 &	2.675 &	2.892 &	2.854 \\
    web-GL & \textbf{0.742} &	\underline{0.932} & 1.124 &	2.186 &	1.607 &	1.020 \\
    synthetic & \textbf{1.154} & \underline{1.243} & 1.388 & 1.761 & 1.679 & 1.564 \\
    \midrule
    & \multicolumn{6}{c}{Running Time (s)} \\ \cmidrule{2-7}
    cit-PT & {\KXREVIEW{\underline{76.0}}} &    {\KXREVIEW{\textbf{73.5}}} & 80.4 &	224.9 &	232.5 &	273.8 \\
    com-YT & {\KXREVIEW{\underline{17.8}}} & {\KXREVIEW{\textbf{16.6}}} & 19.2 &	217.2 &	230.1 &	251.7 \\
    web-GL & {\KXREVIEW{\underline{31.0}}} & {\KXREVIEW{\textbf{25.3}}}	 & 35.9 &	78.1 &    81.24 &	84.7  \\
    synthetic & \underline{60.4} & \textbf{58.9} & 63.4 & 180.3 & 192.1 & 197.7  \\
    \bottomrule
\end{tabular}
\vspace{-4mm}
\end{table}

\smallskip\noindent
\textbf{(1) Evaluation on real and synthetic datasets.}
The results of counting 4-cliques under massive deletion scenario are shown in Table~\ref{tab:res-4clique-massive}. The results of counting wedges, triangles and 4-cliques under light deletion scenario are shown in Table~\ref{tab:res-wedge-light}, Table~\ref{tab:res-triangle-light} and Table~\ref{tab:res-4clique-light}, respectively.

\begin{figure}[th]
\vspace{-6mm}
    \centering
	\subfigure[ARE (light deletion scenario)]{
	    \label{fig:scale-are-light}
		\includegraphics[width=0.23\textwidth]{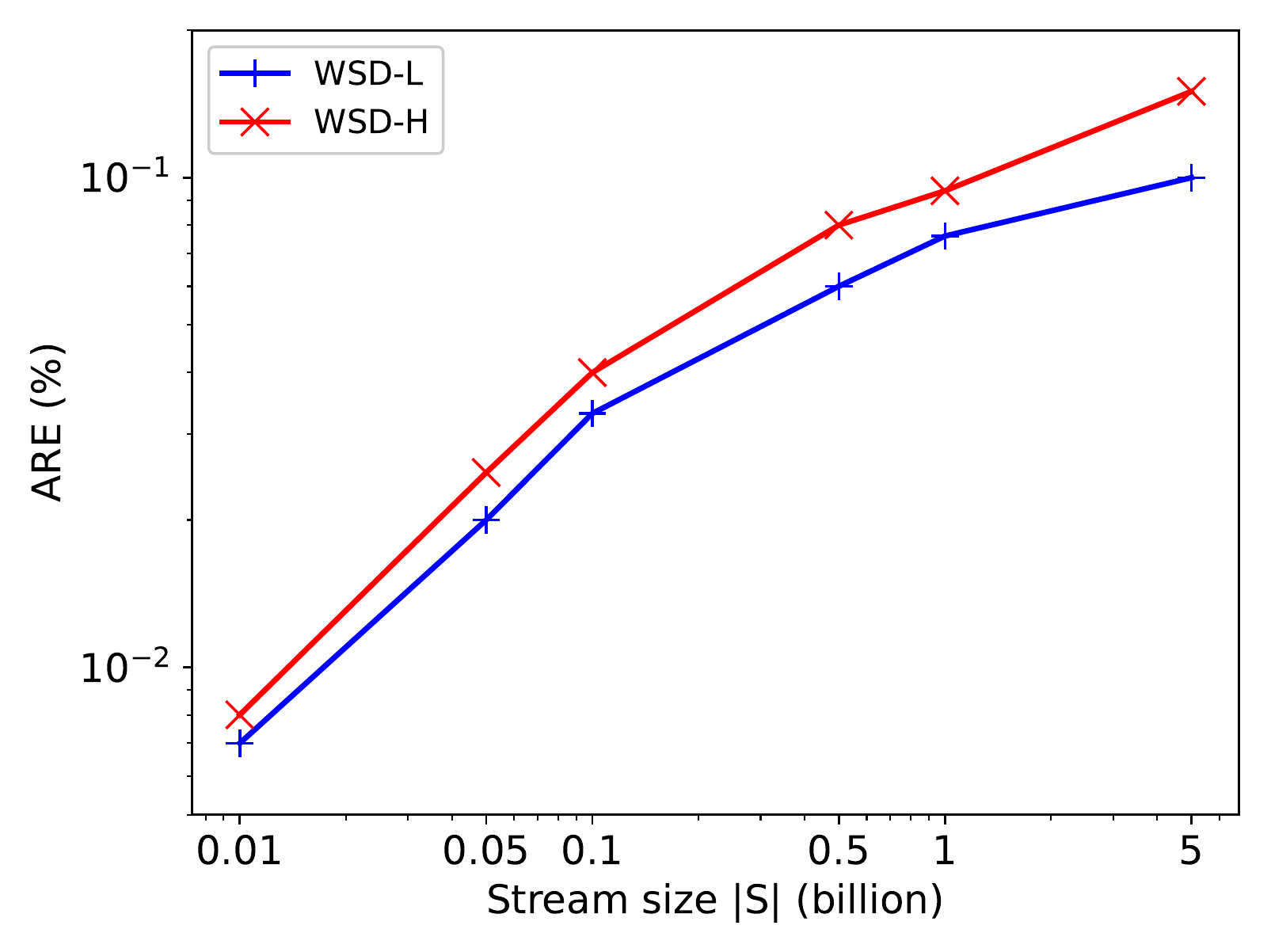}}
	\subfigure[Time (light deletion scenario)]{
	    \label{fig:scale-time-light}
		\includegraphics[width=0.23\textwidth]{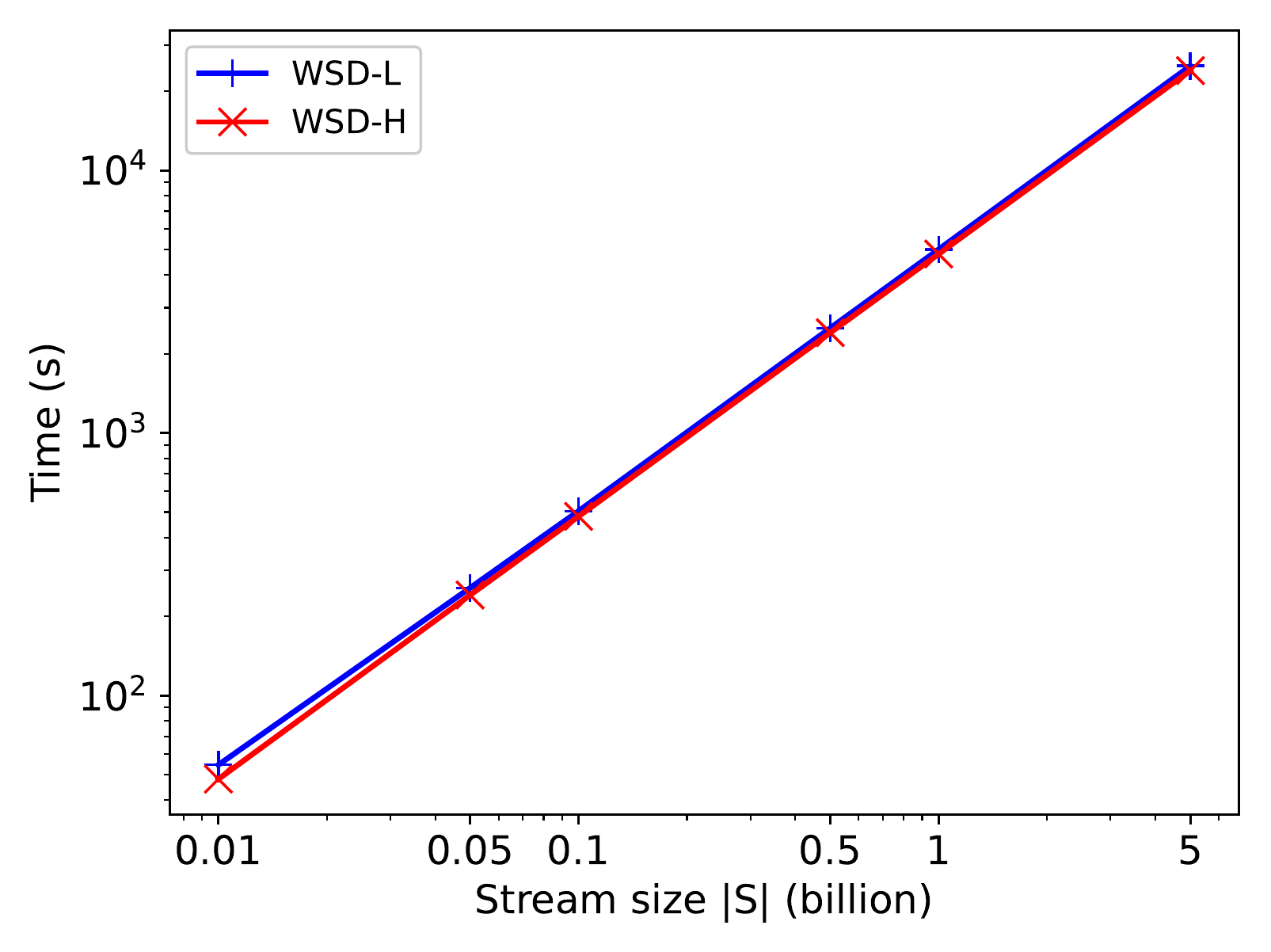}}
    \vspace{-2mm}
    \caption{Results on scalability test of counting triangles, reporting the ARE and running time of \texttt{WSD-L} and \texttt{WSD-H} on different size of streams (synthetic). }
    \vspace{-4mm}
    \label{fig:scale-light}
\end{figure}

\smallskip\noindent
\textbf{(2) Scalability.}
The results of the scalability test of \texttt{WSD} under light deletion scenario are shown in Figure~\ref{fig:scale-light}.

\smallskip\noindent
\textbf{(3) Impact of the ordering of the stream.}
Figure~\ref{fig:light-are-o} shows the ARE results of counting triangles on cit-PT under light deletion scenario.

\smallskip\noindent
\textbf{(4) Effects of $M$}. Figure~\ref{fig:light-are-m} shows the results of counting triangles on cit-PT under the light deletion scenario.

\begin{figure*}[th]
    \centering
    \subfigure[Ordering of the stream (cit-PT)]{
	    \label{fig:light-are-o}
		\includegraphics[width=0.235\textwidth]{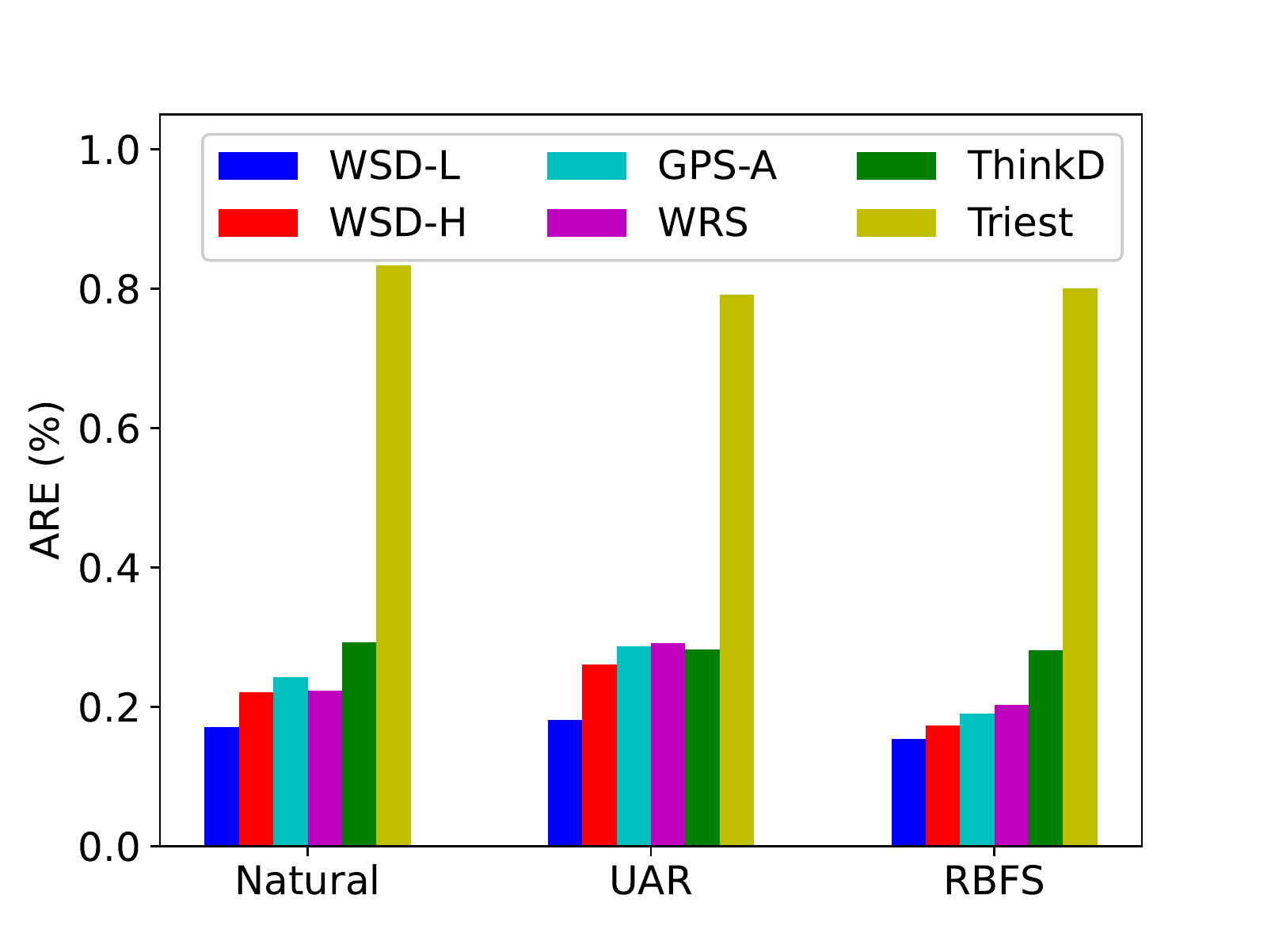}}
	\subfigure[Max. reservoir size $M$ (cit-PT)]{
	    \label{fig:light-are-m}
		\includegraphics[width=0.235\textwidth]{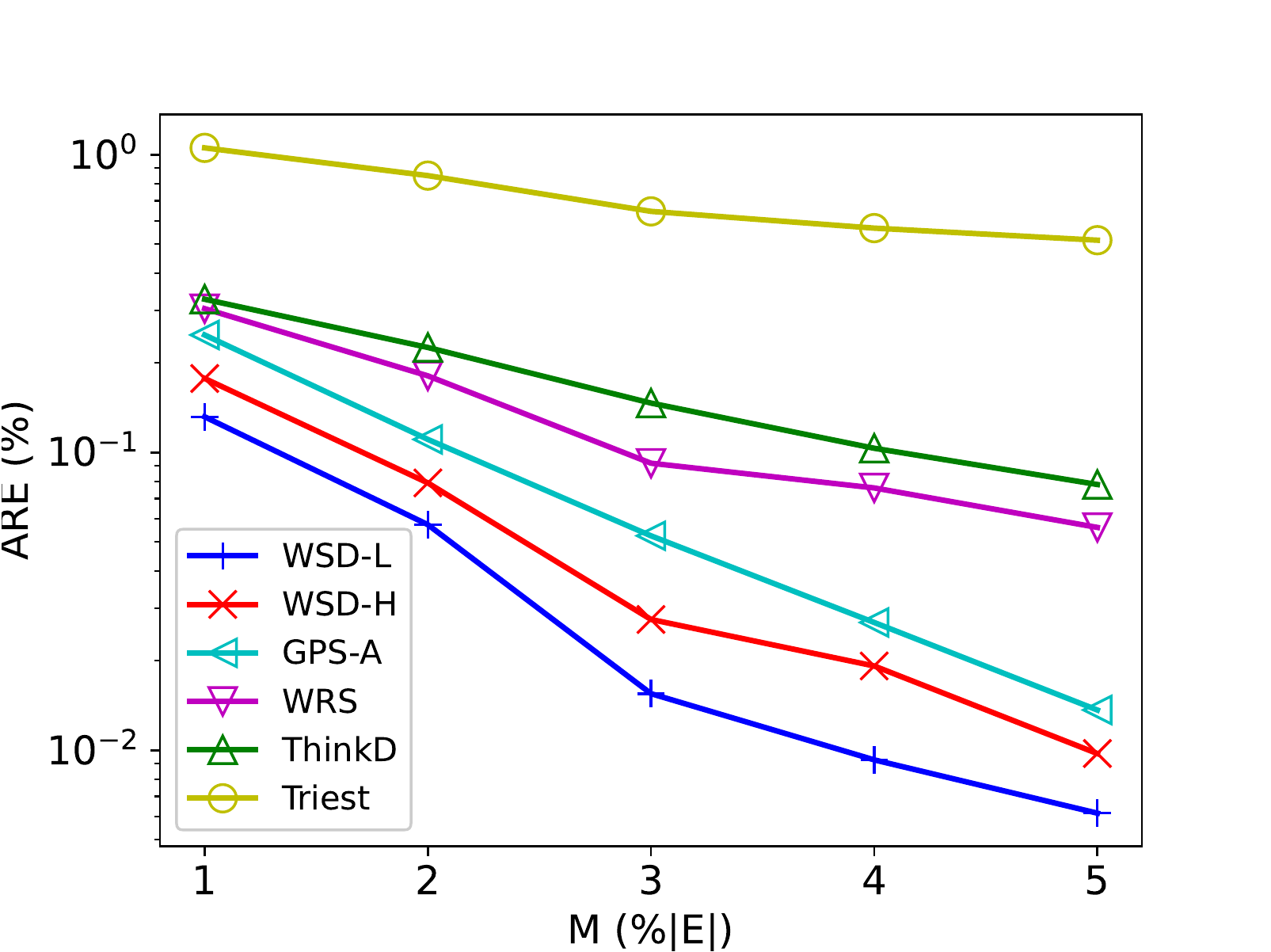}}
    \subfigure[{\KXREVIEW{Training size} (synthetic)}]{
	    \label{fig:train-light}
		\includegraphics[width=0.22\textwidth]{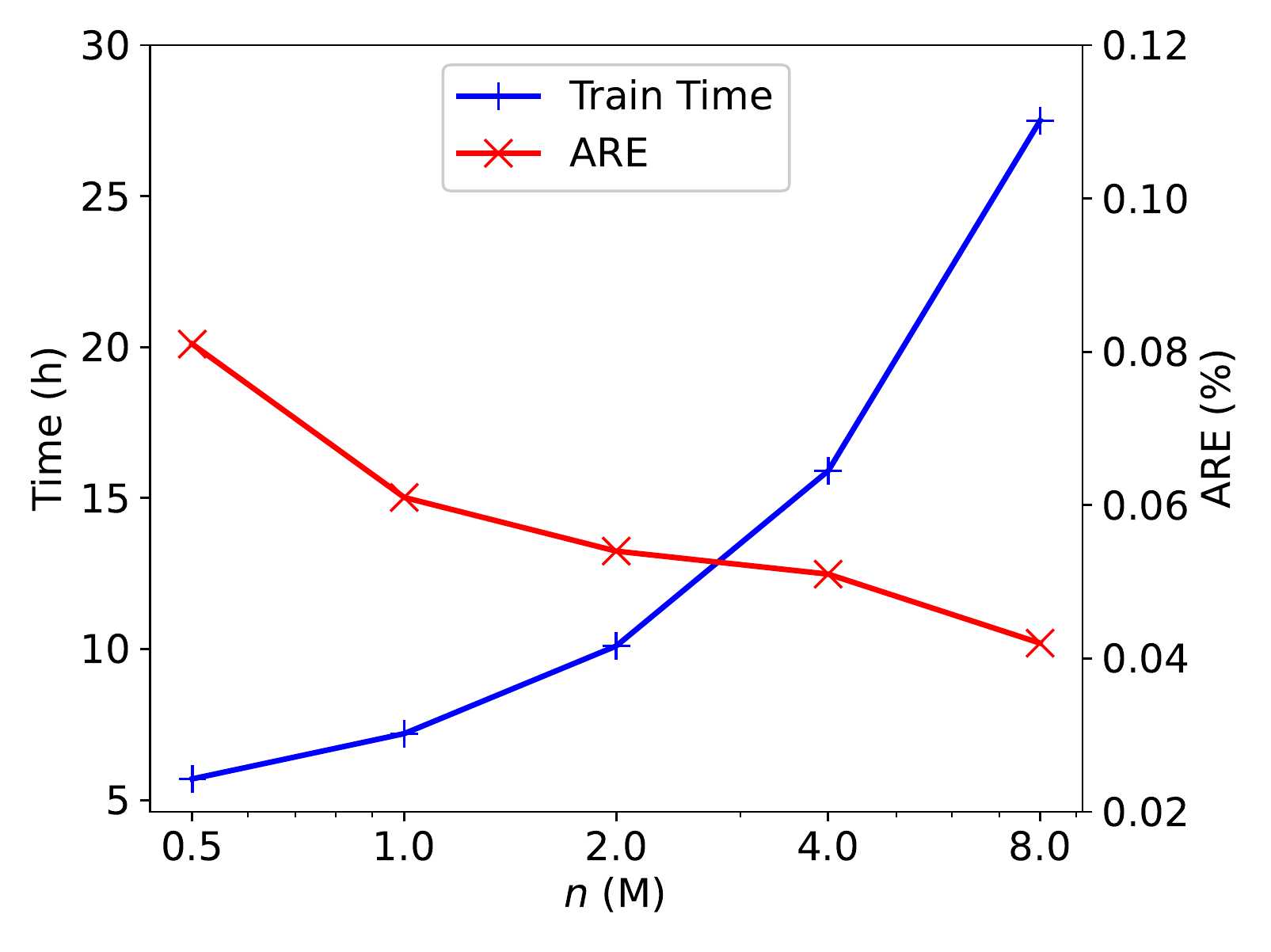}}
	\subfigure[\KXREVIEW{Relationship (cit-PT)}]{
	    \label{fig:relation-light}
		\includegraphics[width=0.235\textwidth]{sup_figs/relation-light.pdf}}
    \vspace{-2mm}
    \caption{Results of counting triangles under light deletion scenario: (a) the ARE under different orderings of the stream; (b) the ARE under different $M$'s; (c) the training time and ARE under different size of training graphs; (d) the relationship between the weights and the number of triangles. }
    \vspace{-4mm}
    \label{fig:res-light}
\end{figure*}

\begin{table}[th]
\vspace{-4mm}
    \centering
    {\KXREVIEW{
    \caption{Training time (hours) of counting triangles ($\triangle$) and wedges ($\wedge$) on four real datasets under light deletion scenario. }
    \vspace{-2mm}
    \label{tab:train-light}
    \begin{tabular}{c|c|c|c|c|c|c|c|c}
    \toprule
    \textbf{Graph} & \multicolumn{2}{c|}{cit-HE} & \multicolumn{2}{c|}{com-DB} & \multicolumn{2}{c|}{soc-TX} & \multicolumn{2}{c}{web-SF} \\
    \midrule
    Pattern $H$ & $\triangle$ & $\wedge$ & $\triangle$ & $\wedge$ & $\triangle$ & $\wedge$ &$\triangle$ & $\wedge$\\
    \midrule
    Time (h)  & 15.6 & 15.1 & 7.2 & 6.4 & 8.5 & 7.2 & 11.7 & 10.9 \\
    \bottomrule
    \end{tabular}
    }}
    \vspace{-4mm}
\end{table}

\smallskip\noindent
\textbf{(5) Training.}
We report the training time of counting triangles and wedges on four datasets under light deletion scenarios in Table~\ref{tab:train-light}. The results of how the size of training graphs affects the model performance under light deletion scenario are shown in Figure~\ref{fig:train-light}.

\smallskip\noindent
\textbf{(6) Relationship between an edge's weight in \texttt{WSD-L} and its associated subgraph counts. }
Figure~\ref{fig:relation-light} shows the relationship results of counting triangles under light deletion scenario.

\begin{table}[ht]
\vspace{-4mm}
\scriptsize
    \centering
    \caption{Results on transferability of \texttt{WSD-L} under light scenarios. }
    \vspace{-2mm}
    \label{tab:transfer-light}
    \begin{tabular}{c|ccccc|c}
    \toprule
        (\textbf{Training}) & cit-HE & com-DB & soc-TX & web-SF & synthetic & \texttt{WSD-H}\\
    \midrule
        cit-PT  & \textbf{0.171} & 0.213 & 0.192 & \underline{0.188} & 0.204 & 0.221\\
        com-YT & \underline{0.055} & \textbf{0.051} & 0.059 & 0.056 & 0.058 & 0.059\\
        soc-TW & 0.681 & 0.702 & \textbf{0.576} & \underline{0.631} & 0.732 & 0.762\\
        web-GL & \underline{0.063} & 0.068 & 0.065 & \textbf{0.061} & 0.067 & 0.069\\
    \bottomrule
    \end{tabular}
    \vspace{-4mm}
\end{table}

\smallskip\noindent
\textbf{(7) Transferability of \texttt{WSD-L}.}
Table~\ref{tab:transfer-light} shows the ARE results on transferability test of counting triangles under light deletion scenario.



\begin{table}[th]
\vspace{-4mm}
    \centering
    \caption{ARE (\%) on ablation study of \texttt{WSD-L} of counting triangles on four real datasets. }
    \vspace{-2mm}
    \label{tab:ablation}
    \begin{tabular}{C{1.5cm}C{1.8cm}C{1.8cm}C{1.8cm}}
    \toprule
    (\textbf{Massive}) & \texttt{WSD-L} (Max)  & \texttt{WSD-L} (Avg) & \texttt{WSD-H}\\
    \midrule
    cit-PT  & \textbf{0.075} & \underline{0.081} & 0.083 \\
    com-YT  & \textbf{0.048} & \underline{0.050} & 0.053 \\
    soc-TW & \textbf{0.400} & \underline{0.540} & 0.710\\
    web-GL  & \textbf{0.031} & \underline{0.033} & 0.037  \\
    \midrule
    \midrule
    (\textbf{Light}) & \texttt{WSD-L} (Max)  & \texttt{WSD-L} (Avg) & \texttt{WSD-H} \\
    \midrule
    cit-PT  & \textbf{0.171} & \underline{0.189} & 0.221 \\
    com-YT  & \textbf{0.051} & \underline{0.052} & 0.059  \\
    soc-TW & \textbf{0.564} & \underline{0.649} & 0.762 \\
    web-GL  & \textbf{0.063} & \underline{0.067} & 0.069  \\
    \bottomrule
    \end{tabular}
    \vspace{-4mm}
\end{table}

{\KXREVIEW{
\smallskip\noindent
\textbf{(8) Ablation study.} 
%
We conduct ablation study on the definitions of $v_j$ and $s_k^v$ (Eqs.~(\ref{eq:vj}) and (\ref{eq:sv})) as follows. 
{\KXREVIEW{We change the $\max$ function in Eq.~(\ref{eq:vj}) to the average function, i.e., $v_j = Avg\{i_j \mid e_{i_j} \in J, J\in \mathcal{H}_k \}$, and concatenate these values together
to form $s_k^v$.}}
We denote the algorithm using the definition in Eq.~(\ref{eq:vj}) by \texttt{WSD-L} (Max) {and the} algorithm using the above definition by \texttt{WSD-L} (Avg). 
The results of the comparison are shown in Table~\ref{tab:ablation}. 
\texttt{WSD-L} (Max) is more effective than \texttt{WSD-L} (Avg). 
One possible reason is that \texttt{WSD-L} (Max) can extract the temporal information of the recent edges directly. 
%
}}

\smallskip\noindent

\begin{figure}[th]
\vspace{-4mm}
    \centering
    \subfigure[ARE (massive deletion scenario)]{
	    \label{fig:beta-m}
		\includegraphics[width=0.23\textwidth]{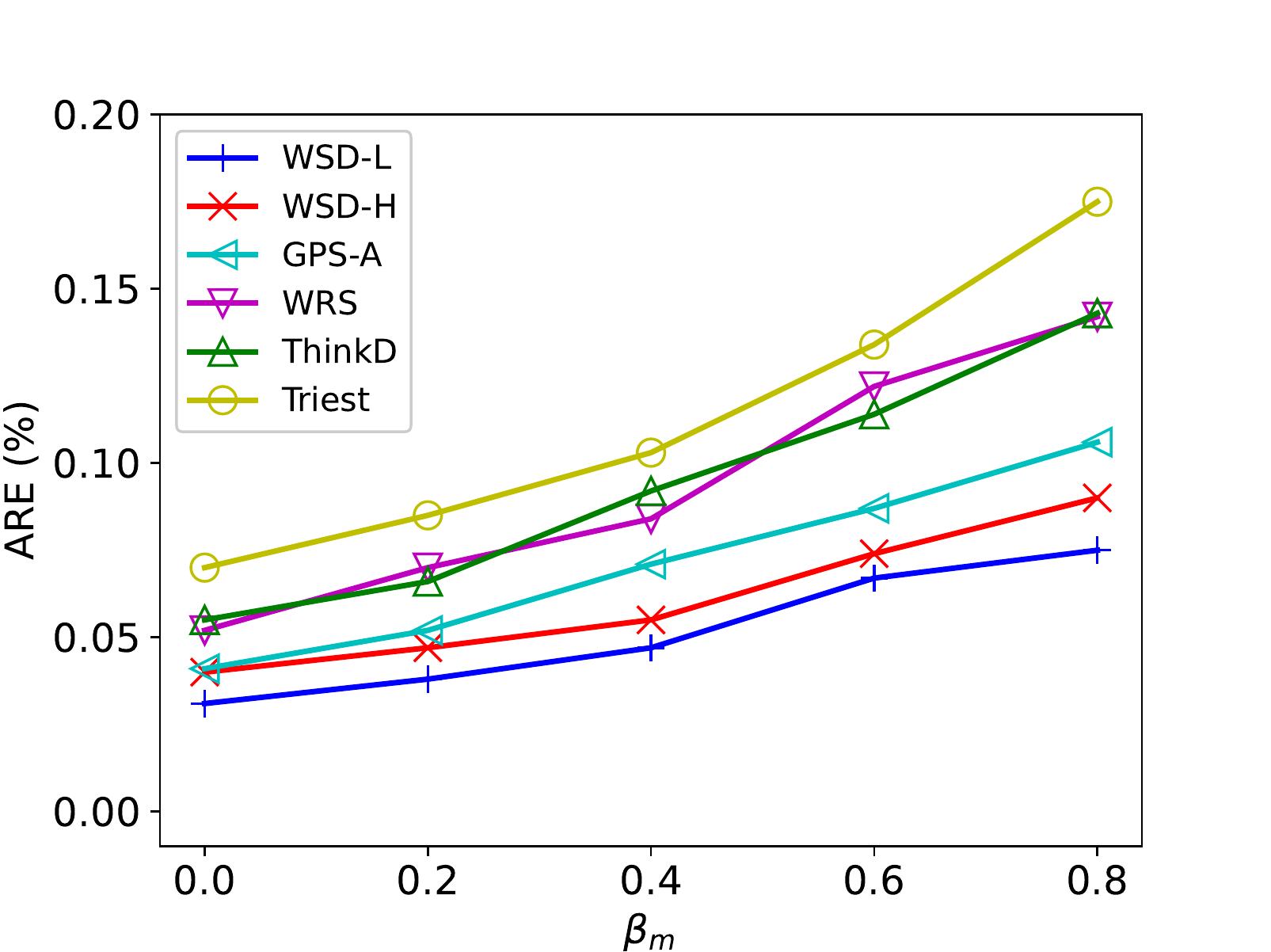}}
	\subfigure[ARE (light deletion scenario)]{
	    \label{fig:beta-l}
		\includegraphics[width=0.23\textwidth]{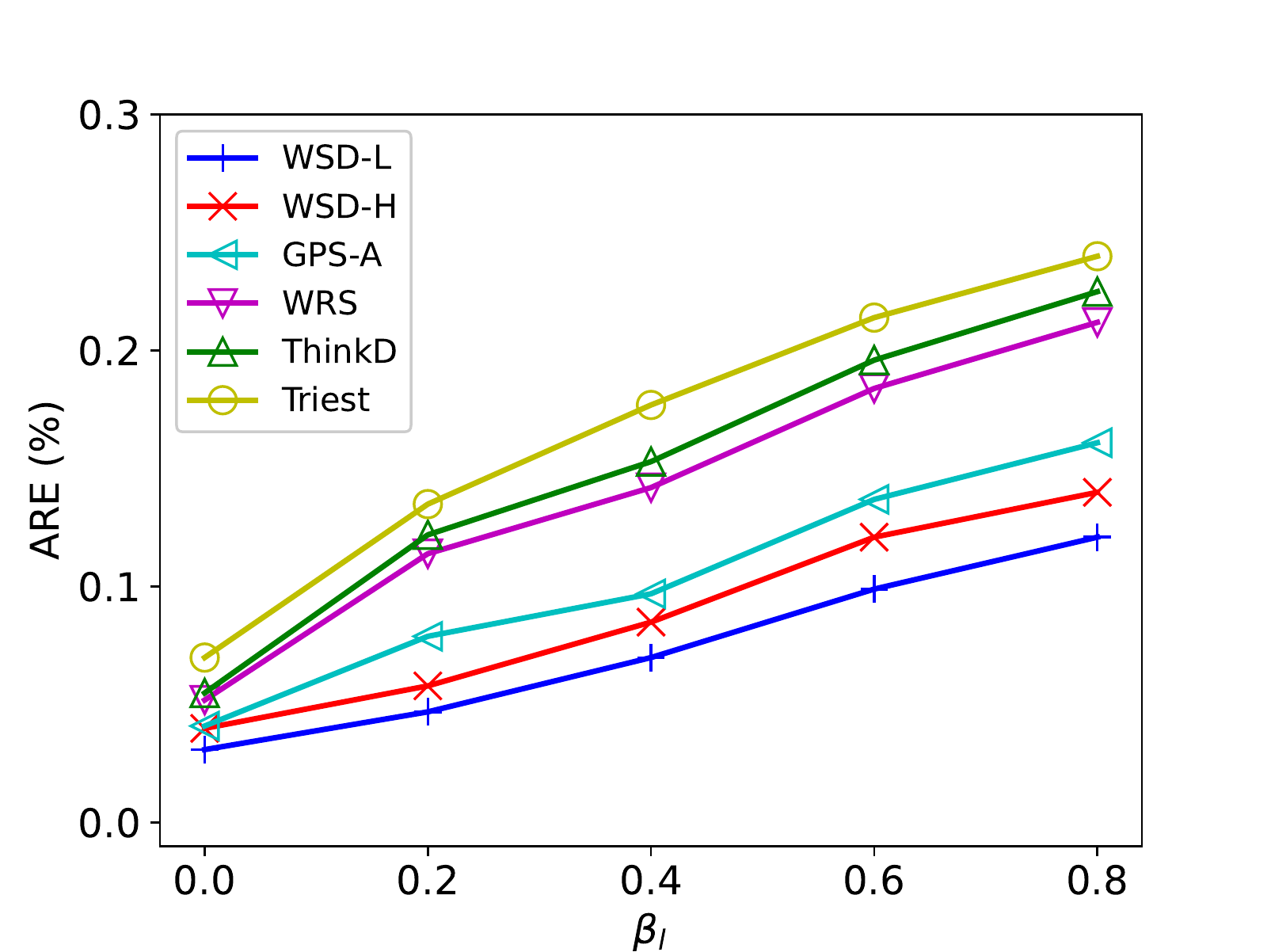}}
    \vspace{-2mm}
    \caption{Results on varying $\beta_m$ and $\beta_l$ on cit-PT, reporting the ARE. }
    \vspace{-4mm}
    \label{fig:beta}
\end{figure}

\smallskip\noindent
\textbf{(9) Effects of $\beta_m$ and $\beta_l$. } We study the effects of $\beta_m$ {\CHENG (resp. $\beta_l$), which indicates the probability of a deletion event or the potion of deletion events,} by varying its value from the range $\{0, 0.2, 0.4, 0.6, \bm{0.8}\}$ {\CHENG (resp. $\{0, \bm{0.2}, 0.4, 0.6, 0.8\}$)}. 
%
{\KAIXIN{For each parameter, we retrain the policy following the steps that are introduced in Section~\ref{subsec:setp-up}. }}
Figure~\ref{fig:beta} shows the results of counting triangles on cit-PT. 
We observe that the ARE increases as {\CHENG $\beta_m$ and $\beta_l$} increase. 
{\CHENG There can be two possible reasons.}
First, as the number of deletion {\CHENG events} increases, the size of the stream becomes larger. Thus, the estimation results on a stream with more edge events would be less accurate. Second, there are {\CHENG fewer} triangles at the end of the stream when there are more deletions. 
{\CHENG Therefore, the ARE (which is a relative error measurement) tends to be larger even the absolute error is not changed.}
%
Under different scenarios and with different values of $\beta_m$ and $\beta_l$, \texttt{WSD-L} and \texttt{WSD-H} outperform other algorithms due to their weight-sensitive nature.

\end{document}